\newcommand{\nc}{\newcommand}
\nc{\keywords}[1]{\par\addvspace\baselineskip\noindent\keywordname\enspace\ignorespaces#1}
\nc*\circled[1]{\tikz[baseline=(char.base)]{\node[shape=circle,draw,inner sep=0.01pt] (char) {#1};}}
\nc{\nn}{\nonumber \\}
\nc{\ds}{\displaystyle}
\nc{\sgn}{\rm sgn}
\nc{\T}{{\rule{0pt}{2.2ex}}}
\nc{\B}{{\rule[-1.0ex]{0pt}{0pt}}}
\nc{\kett}[1]{{|{#1}\rangle}}
\nc{\braa}[1]{{\langle{#1}|}}
\nc{\braakett}[2]{{\langle{#1}|{#2}\rangle}}
\nc{\commentout}[1]{}
\nc{\mxp}{{\rm mp}}
\nc{\rnd}{{\rm rand}}
\nc{\ks}{{\rm KS}}
\nc{\pis}{{\rm PS}}
\nc{\cf}{{\rm CF}}
\nc{\cfc}{{\rm CNS}}
\nc{\cfg}{{\rm GwDP}}
\nc{\ip}{{\rm IP}}
\nc{\op}{{\rm OP}}
\nc{\aset}{{\{0,1\}^*}}
\nc{\bs}{{\rm BP}}
\nc{\rs}{{\overset{\$}{\leftarrow}}}
\nc{\acct}[1]{AES-C{#1}}
\nc{\scct}[1]{SHA-C{#1}}
\nc{\from}{\colon}
\nc{\eqnsp}{ }
\renewcommand{\qed}{\hfill\square}
\nc{\mapstr}[1]{{ \multirow{3}{*}{$\overset{\textrm{#1}}{\longmapsto}$} }}
\nc{\mapdstr}[3]{{ \multirow{3}{*}[#3em]{$\overset{\substack{\textrm{#1}\\\textrm{#2}}}{\longmapsto}$} }}
\begin{document}

\title{Time-Space Complexity of Quantum Search Algorithms in Symmetric Cryptanalysis}

\author{Panjin Kim \and Kyung Chul Jeong$^{(\text{\Letter})}$ \and Daewan Han}
\institute{National Security Research Institute, Daejeon 34044, Korea \\
jeongkc@nsr.re.kr}

\maketitle
\begin{abstract}
Performance of cryptanalytic quantum search algorithms is mainly inferred from \emph{query} complexity which hides overhead induced by an implementation.
To shed light on quantitative complexity analysis removing hidden factors, we provide a framework for estimating
time-space complexity, with carefully accounting for characteristics of target cryptographic functions.
Processor and circuit parallelization methods are taken into account, resulting in the time-space trade-offs curves in terms of \emph{depth} and \emph{qubit}.
The method guides how to rank different circuit designs in order of their efficiency.
The framework is applied to representative cryptosystems NIST referred to as a guideline for security parameters, reassessing the security strengths of AES and SHA-2.
\keywords{Post-quantum cryptography (PQC), Circuit model, Grover, Resource estimates, Time-space trade-offs, AES, SHA-2}
\end{abstract}

\section{Introduction}\label{sec:intro}


Quantum cryptanalysis is an area of study that has long been developed alongside the field of quantum computing, as many cryptosystems are expected to be directly affected by quantum algorithms. One of the quantum algorithms that would have an impact on symmetric cryptosystems is Grover's algorithm~\cite{grover97}.
It had been widely known that many symmetric cryptosystem's security levels will be simply reduced by half due to the asymptotic behavior of the query complexity of Grover's algorithm under the oracle assumption.
As the field has matured over decades, not mere asymptotic but more quantitative approaches to the cryptanalysis are also being considered recently~\cite{sha16,aes16,revs15,ecdlp17,mq16}.
These works have substantially improved the understanding of quantum attacks by systematically estimating quantum resources.
Nevertheless, it is still noticeable that the existing works on resource estimates are more intended for suggesting exemplary quantum circuits (so that one can count the number of required gates and qubits explicitly) than fine-tuning of actual attack designs.

The importance of estimating costs of quantum search algorithms beyond pioneering works should be emphasized as it can be utilized to suggest practical security levels in the post-quantum era.
NIST indeed suggested security levels based on the resistances of AES and SHA to quantum attacks in the PQC standardization call for proposals document~\cite{nist-quantum}. In addition, the difficulty of measuring the complexity of quantum attacks was questioned in the first NIST PQC standardization workshop \cite{moody18}.
The main purpose of this work is to formulate the time-space complexity of quantum search algorithm in order to provide reliable quantum security strengths of classical symmetric cryptosystems.

\subsection{This work}
There exist two noteworthy points overlooked in the previous works.
First, the target function to be inverted
is generally a pseudo-random function or a cryptographic hash function. 
Under the characteristics of such functions, bijective correspondence between input and output is not guaranteed.
This makes Grover's algorithm \emph{seemingly} inadequate due to the unpredictability of the number of targets.
The second point is time-space trade-offs of quantum resources.
Earlier works on quantitative resource estimates have implicitly or explicitly assumed a single quantum processor.
Presuming that the \textit{resource} in classical estimates includes the number of processors the adversary is equipped with, the single processor assumption is something that should be revised.

Being aware of the issues, we come up with a framework for analyzing the time-space complexity of cryptanalytic quantum search algorithms. The main consequences we presented in this paper are three folded:

\subsubsection{Precise query complexity involving parallelization.}

The number of oracle queries, or equivalently Grover iterations, is first estimated as exactly as possible, reasonably accounting for previously overlooked points.
Random statistics of the target function are carefully handled which lead to increase in iteration number compared with the case of unique target.
Surprisingly, however, the cost of dealing with random statistics in this paper is not expensive compared with the previous work \cite{aes16} under single processor assumption.
Furthermore, when processor parallelization is considered, we observed that this extra cost gets even more negligible.
It is also interesting to recognize that the parallelization methods could vary depending on the search problems.
After taking the asymptotical big O notation off, the relation between time and space in terms of Grover iterations and number of processors, called \emph{trade-offs curve}, is obtained.
Apart from resource estimates, investigating the trade-offs curve of state-of-the-art collision algorithm in~\cite{Chailloux2017} with optimized parameters is one of our major concerns.

\subsubsection{Qubit-depth trade-offs and circuit design tuning.}
In the next stage, time and space resources are defined in a way that they can be interpreted as physical quantities.
Cost of quantum circuits for cryptanalytic algorithms can be estimated in units of \emph{logical qubits} and \emph{Toffoli-depths}.
Taking the total number of gates as time complexity disturbs accurate estimates for the speed of quantum algorithms due to far different overheads introduced by various gates in real operation.
With the definitions of quantum resources, the trade-offs curve now describes the relation between number of qubits and circuit depths.
Since we are given a `relation' between time and space, it is then possible to grade the various quantum circuits in order of efficiency.
In other words, the method described so far enables one to tell which attack design is more cost-effective.

By applying generic methodology newly introduced, time-space
complexities of AES and SHA-2 against quantum attacks are measured in the following way.
Various designs are constructed by assembling different circuit components with options such as reduced depth at the cost of the increase in qubits (or vice versa).
Design candidates are then subjected to the trade-offs relation for comparison.
The trade-offs coefficient of the most efficient design represents the hardness of quantum cryptanalysis.
Compared with pre-existing circuit designs, we have improved the circuits by reducing required qubits and/or depths in various ways.
However, we do not claim that we have found the optimal attacks for AES and SHA-2.
The method enables us to select the `best' one out of candidates at hand.
It is remarked that the explicit circuit designs for quantum collision search algorithms is first introduced.



\subsubsection{Revisiting the security levels of NIST PQC standardization.}

The procedure is applied to each primitive of security strength categories NIST specified in~\cite{nist-quantum}.
A new threshold that is required for the category classification, based on the cost metric proposed in this work, is provided in Fig.\;\ref{fig:intro_security}.
It includes wide range of parameters
and the quantum collision finding algorithms which do not outperform classical counterparts to explicitly recognize the quantum-side complexities of all the categories.

\begin{figure}[htbp]
    \centering
    \includegraphics[width=0.8\textwidth]{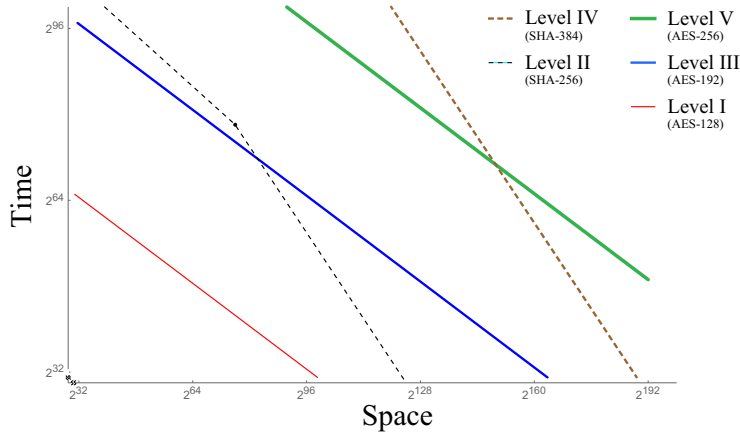}
    \caption{Time-space cost, which has conditional ordering, of the quantum attacks on five security strength category representatives of NIST PQC standardization}
    \label{fig:intro_security}
\end{figure}

We end this subsection with an important caveat.
Use of classical resources appears in this paper, but we do not handle the complexity induced by it because of unclear comparison criteria for quantum and classical resources.

\subsection{Organization}
Next section covers the backgrounds including Grover's algorithm and its variants.
In Sect.~\ref{sec:expected_iter}, the time-space complexity of relevant algorithms in a unit of Grover iteration is investigated.
A basic unit of quantum computation is proposed in Sect.~\ref{sec:quantum_cost}.
Sections~\ref{sec:AES},~\ref{sec:SHA} and~\ref{sec:SHA-coll} show the results of applying the time-space analysis to AES and SHA-2.
In Sect.~\ref{sec:sec-str}, based on the observations made in the previous sections, a comprehensive figure summarizing the quantum security strengths of AES and SHA-2 is drawn.
Section~\ref{sec:con} summarizes the paper.



\section{Backgrounds}\label{sec:back}
Grover's algorithm, the success probability, parallelization methods, and some generalizations or variants are explained briefly.
A breif review of AES and SHA-2, and an introduction to related works on resource estimates are followed.
We do not cover the basics of quantum computing, but leave the references~\cite{PQC,QCQI} for interested readers.
In this paper, every bra or ket state is normalized.

\subsection{Grover's Algorithm}
Throughout the paper, the target function is denoted by $f$ and $N=2^n$ for some $n\in\mathbb{N}$. Consider a set $X$ of size $N$ and a function $f\from X \rightarrow \{0,1\}$,
\eqnsp
\begin{align*}
   f(x) = \left\{
            \begin{array}{ll}
              1\enspace, & \hbox{~~if $x \in T$ \enspace,} \\
              0\enspace, & \hbox{~~otherwise \enspace,}
            \end{array}
          \right.
\end{align*}
where $T$ of size $t$ is a set of targets to be found.

Grover's algorithm~\cite{grover97} is an algorithm that repeatedly applies an operator $Q=-A S_0 A^{-1} S_f$, called \emph{Grover iteration}, to the initial state $\kett{\Psi} = A \kett{0}$, where
\eqnsp
\begin{align}\label{eq:grover}
     A = H^{\otimes n} \enspace,
   ~S_0 = I - 2 \kett{0}\braa{0} \enspace,
   ~S_f = I - 2 \kett{\tau}\braa{\tau} \enspace,
\end{align}
where $H^{\otimes n}$ is a set of Hadamard operators
and
$\kett{\tau}$ is a target state which is an equal-phase and equal-weight superposition of $\kett{x}$ for all $x\in T$.
The roles of $S_0$ and $S_f$ are to swap the sign of zero and $\kett{\tau}$ states, respectively.

The operators $S_f$ and $-A S_0 A^{-1}$ are known as \emph{oracle} and \emph{diffusion} operators, respectively.
By acting the oracle operator on a state, only the target state is marked through the sign change.
The diffusion operator flips amplitudes around the average.

Success probability of measurement as a function of the number of iterations has been studied in~\cite{BBHT98}, observing the optimal number of iterations that minimizes the ratio of the iterations to success rate.
We introduce the results below with notation that is used throughout the paper.

By applying $Q$ on the initial state $i$-times, the success probability of measuring one of the $t$ solutions in the domain of size $N$, denoted by $p_{t,N}(i)$, becomes
\eqnsp
\begin{align}\label{eq:suc_prob}
  p_{t,N}(i) = \sin^2 \left[ (2i+1) \theta_{t,N} \right] \enspace,
\end{align}
where $\sin \left(\theta_{t,N}\right)  = \sqrt{{t}/{N}} \left( = \braakett{\Psi}{\tau} \right)$. The number of repetitions of $Q$ maximizing the success probability of measurement, denoted by $I_{t,N}^{\mxp}$, is estimated as
\eqnsp
\begin{equation}\label{eq:iter_mp}
    I_{t,N}^{\mxp}=(\pi/4)\cdot \sqrt{N/t}  \enspace.
    \footnote{
    Rounding function is not explicitly used in this paper for simplicity.}
\end{equation}

When the measurement is made after $i$-repetitions of $Q$, the expected number of Grover iterations to find one of the targets can be expressed as a function of $i$. For $t$ targets in the domain of size $N$, the function is denoted by $I_{t,N}(i)$ which reads $I_{t,N}(i)=i/p_{t,N}(i)$.
The optimal number of iterations $i_{t,N}$ that minimizes $i / p_{t,N}(i)$ is found to be $i_{t,N}=0.583\ldots\cdot \sqrt{N/t}$, and then the expected number of iterations, denoted by $I_{t,N}$ reads
\eqnsp
\begin{align} 
  \label{eq:iter_t_val}
  I_{t,N} = I_{t,N}(i_{t,N}) = 0.690\ldots\cdot\sqrt{N/t} 
   \enspace.
\end{align}
In some cases where the domain size is $N$, it is omitted such as $p_{t}(i)(=p_{t,N}(i))$ or $I_{t}(=I_{t,N})$, for readability.

Parallelization of Grover's algorithm using multiple quantum computers has been investigated in applications to cryptanalysis~\cite{bernstein17,nist-quantum,zalka99}. Consideration of parallelization in an hybrid algorithm can be found in~\cite{BY18}. Asymptotically the execution time is reduced by a factor of the square root of the number of quantum computers. There are two straightforward parallelization methods having such property, called \emph{inner} and \emph{outer} parallelization.

Here we fix some notations for parameters related to parallelization. $T_q$ and $S_q$ stand for the number of sequential Grover iterations and the number of quantum computers, respectively. $S_c$ stands for the amount of classical resources, such as the size of storage and/or the number of processors. Definitions of two parallelization methods can be given as follows.

\begin{definition}[Inner Parallelization (IP)]\label{def:ip}
After dividing the entire search space into $S_q$ disjoint sets, each machine searches one of the $S_q$ sets for the target. The number of iterations can be reduced due to the smaller domain size.
\end{definition}

\begin{definition}[Outer Parallelization (OP)]\label{def:op}
Copies of Grover's algorithm on the entire search space is executed in $S_q$ machines. Since it is successful if any of $S_q$ machines finds the target, the number of iterations can be reduced.
\end{definition}
%


Parallelization is inevitable once the notion of MAXDEPTH~\cite{nist-quantum}
is applied.

\subsection{Generalizations and Variants}\label{subsec:variants}

Fixed-point~\cite{fixed-search} and quantum amplitude amplification (QAA)~\cite{QAA} algorithms are generalizations of Grover's algorithm. A brief review of QAA is given in this subsection which appears as a component of a collision finding algorithm in later sections.
We skip over the fixed-point algorithm as it has no advantage over Grover's algorithm and QAA in this work.\commentout{APPDX}\footnote{There are two reasons. One is that Fixed-point search requires \textit{two} oracle queries per iteration, and the other is $\log (2/\delta)$ factor in (3) in \cite{fixed-search} which also increases the required number of iterations depending on the bounding parameter $\delta$. Comparing these factors with the overhead in our method introduced by random statistics, we concluded that the fixed-point algorithm is not favored.}

There exist a number of variants of Grover's algorithm in application to collision finding. In \cite{Brassard1998}, Brassard, H{\o}yer and Tapp suggested a quantum collision finding algorithm (BHT) of $O(N^{1/3})$ query complexity using \emph{quantum memory} amounting to $O(N^{1/3})$ classical data.
A multi-collision algorithm using BHT was suggested in~\cite{Hosoyamada2017}.
In this work however, we do not consider BHT as a candidate
algorithm for the following reasons.
One is that the algorithm entails a need for quantum memory where the realization and the usage cost are controversial~\cite{qram-cost}, and the other is that we are unable to come up with any implementation restricted to use of elementary gates that do not exceed the total cost of $O(N^{1/2})$.

Apart from quantum circuits, algorithms primarily designed for other type of models such as measurement-based quantum computation also exist, for example quantum walk search~\cite{qwalk03,qwalk04} or element distinctness~\cite{ambainis07}, but we do not cover them as state-of-the-art quantum architecture is targeting for circuit computation.
Interested readers may further refer to~\cite{Hosoyamada2017} and related references therein for more information on quantum collision finding.



Bernstein analyzed quantum and classical collision finding algorithms in~\cite{bernstein-collision}. Quoting the work,
no quantum algorithm with better time-space product complexity than $O(N^{1/2})$ which is achieved by the state-of-the-art classical algorithm~\cite{Rho-DP} had not been reported. If Grover's algorithm is parallelized with the distinguished point method, complexity of $O(N^{1/2})$ can be achieved. This is one of the examples of \emph{immediate ways} to combine quantum search with the rho method as mentioned in \cite{bernstein-collision}. We denote it as Grover with distinguished point (GwDP) algorithm in this paper.

In Asiacrypt 2017, Chailloux, Naya-Plasencia and Schrottenloher suggested a new quantum collision finding algorithm, called CNS algorithm, of $O(N^{2/5})$ query complexity using $O(N^{1/5})$ classical memory~\cite{Chailloux2017}.

\subsubsection{QAA algorithm.}
Basic structure of QAA is the same as Grover's original algorithm.
Initial state $|\Psi \rangle = A |0 \rangle$ is prepared, and then Grover iteration $Q$ is repeatedly applied $i$ times to get success probability (\ref{eq:suc_prob}).
The only difference is that in QAA, the preparation operator $A$ is not restricted to $H^{\otimes n}$ where $N=2^n$, and so thus the search space can be arbitrarily defined.
Detailed derivation is not covered here, but instead we describe the key feature in an example.

As a trivial example, let us assume we are given a quantum computer, and try to find a target bit-string 110011 in a set $N=\{ x\; |\; x \in \{0,1\}^6\; \textrm{and two}$ $\textrm{middle}$ $\textrm{bits are 0} \}$. \commentout{APPDX}
Domain size is not equal to $2^6$, and the initial state can be prepared by $A=H_1 H_2 H_5 H_6$ where $H_r$ is Hadamard gate acting on $r$-th qubit.
Remaining processes are to apply Grover iterations $Q=-A S_0 A^{-1} S_f$ with $A$ given by the state preparation operator just mentioned.
The search space examined is rather trivial, but QAA also works on arbitrary domain.
Non-trivial domain can be given as something like $N=\{ x\; |\; x \in \{0,1\}^6,\; f(x) \ne 0 \}$ for some given function $f$.
It is a matter of preparing a state encoding appropriate search space, or in other words, that is to find an operator $A$.
Once $A$ is constructed, QAA works in the same way as in Grover's algorithm.

\subsubsection{GwDP algorithm.}
GwDP algorithm is a parallelization of Grover's algorithm.
Distinguished point (DP) can be defined by a function output whose $d$ most significant bits are zeros, denoted by $d$-bit DP. We allow the notation DP to indicate a pair of DP and corresponding input or an input by itself.

For $S_q=S_c=2^s$, use $(n-2s)$-bit DP. By running $T_q=O\left(2^{n/2-s}\right)$ times of Grover iterations, DP is expected to be found on each machine. Storing $O(2^s)$ DPs sorted according to the output, a collision is found with high probability. The time-space product is always $T_qS_q=O\left(N^{1/2}\right)$.

\subsubsection{CNS algorithm.}
CNS algorithm consists of two phases, the list preparation and the collision finding.
In the list preparation phase, a list of size $2^l$ of $d$-bit DPs is drawn up
with the time complexity of $O(2^{l+d/2})$ and the classical storage of size $O(2^l)$.
In the collision finding phase QAA algorithm is used. Each iteration of QAA algorithm consists of $O(2^{d/2})$ Grover iterations and $O(2^l)$ operations for the list comparison. After $O\left(2^{(n-d-l)/2}\right)$ QAA iterations, a collision is expected to be found.
In total, CNS algorithm has $O\left(2^{l+d/2}+2^{(n-d-l)/2}(2^{d/2}+2^l)\right)$ time complexity and uses $O(2^l)$ classical memory. With the optimal parameters $l=d/2$ and $d=2n/5$, a collision is found in $T_q=O(N^{2/5})$ with $S_c=O(N^{1/5})$.

If $S_q=2^s$
, time complexity becomes $O(2^{(n-d-l-s)/2}(2^{d/2}+2^l)+2^{l+d/2-s})$ for $s \le \min(l,n-d-l)$.
When $l=d/2$ and $d=2/5 \{n+s\}$, the complexities satisfy
$(T_q)^5(S_q)^3=O(N^2)$ and $T_q(S_c)^3=O(N)$.

\subsection{AES and SHA-2 algorithms}\commentout{APPDX}
A brief review of AES and SHA-2 are given in this subsection.
Specifically, AES-128 and SHA-256 algorithms are described which will form the main body of later sections.

\subsubsection{AES-128.}
Only the encryption procedure of AES-128 which is relevant to this work will be shortly reviewed.
See \cite{nist-aes} for details.

\paragraph{Round.}
AES round consists of four elementary operations; SubBytes, ShiftRows, MixColumns, and AddRoundKey\footnote{The first and the last rounds are different, but will not be covered in detail here.}.
Each operation applies to internal state, which is represented by $4\times4$ array of bytes $S_{i,j}$, as shown in Fig.\;\ref{fig:aes_algorithm}(a).

\begin{itemize}
  \item ShiftRows does cyclic shifts of the last three rows of the internal state by different offsets.
  \item MixColumns does a linear transformation on each column of the internal state that mixes the data.
  \item AddRoundKey does an addition of the internal state and the round key by an XOR operation.
  \item SubBytes does a non-linear transformation on each byte. SubBytes works as substitution-boxes (S-box) generated by computing a multiplicative inverse, followed by a linear transformation and an addition of S-box constant.
\end{itemize}
\paragraph{Key Schedule.}
AES key schedule consists of four operations; RotWord, SubWord, Rcon, and addition by XOR operation.
The sequence of key scheduling is described in Fig.\;\ref{fig:aes_algorithm}(b).
Each operation applies to 32-bit word $w_i$, which is represented by $4\times1$ array of bytes $k_{i,j}^d$.
First four words are given by original key which become the zeroth round key.
More words$-$ 40 in AES-128$-$ are then generated by recursively processing previous words.
Every Sixteen byte $k_{i,j}^d$ constitutes $d$-th round key.
RotWord, SubWord, and Rcon only apply to every fourth word $w_i,~i \in \{3,7,11,...39\}$.

\begin{itemize}
  \item RotWord does a cyclic shift on four bytes.
  \item Rcon does an addition of the
  constant and the word by XOR operation.
  \item SubWord does an S-box operation on each byte in word.
\end{itemize}
\begin{figure}[htbp]
    \centering
    \includegraphics[width=0.90\textwidth]{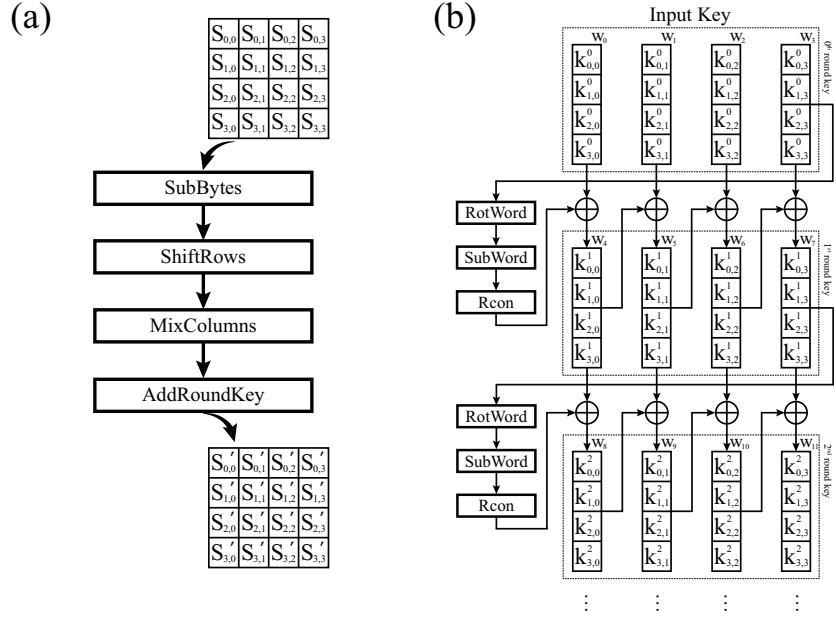}
    \caption{(a) Round operations and (b) key schedule of AES-128 algorithm. Each square box accommodates one byte. In key schedule, 128-bit key is divided into four 32-bit words.}
    \label{fig:aes_algorithm}
\end{figure}

\subsubsection{SHA-256.}
For brevity, only SHA-256 hashing algorithm for one message block which is relevant to this work will be reviewed.
Description of preprocessing including message padding, parsing, and setting initial hash value is also omitted here.
See \cite{nist-sha} for details.

\paragraph{Round.}
SHA-2 round consists of five operations; \emph{Ch, Maj,} $\Sigma_0$, $\Sigma_1$, and addition modulo $2^{32}$.
Round operations apply on eight 32-bit working variables denoted by \emph{a, b, c, d, e, f, g, h}.
See Fig.\;\ref{fig:sha2_algorithm}(a) for procedures.

\begin{itemize}
  \item $Ch(x,y,z) = (x \wedge y) \oplus (\neg x \wedge z)$,
  \item $Maj(x,y,z) = (x \wedge y) \oplus (x \wedge z) \oplus (y \wedge z)$,
  \item $\Sigma_0(x) = ROT\!R^2(x) \oplus ROT\!R^{13}(x) \oplus ROT\!R^{22}(x)$,
  \item $\Sigma_1(x) = ROT\!R^6(x) \oplus ROT\!R^{11}(x) \oplus ROT\!R^{25}(x)$,
  \item[] where $ROT\!R^n(x)$ is circular right shift of $x$ by $n$ positions.
\end{itemize}
%

\paragraph{Message Schedule.}
SHA-2 message schedule consists of three operations; $\sigma_0$, $\sigma_1$, and addition modulo $2^{32}$.
The sequence of message scheduling is described in Fig.\;\ref{fig:sha2_algorithm}(b).
Each operation applies to 32-bit word $W_i$.
First 16 words are given by original message block which become the first 16 words fed to SHA-256 rounds.
More words$-$ 48 in SHA-256$-$ are then generated by recursively processing previous words.

\begin{itemize}
  \item $\sigma_0(x) = ROT\!R^7(x) \oplus ROT\!R^{18}(x) \oplus S\!H\!R^3(x)$,
  \item $\sigma_1(x) = ROT\!R^{17}(x) \oplus ROT\!R^{19}(x) \oplus S\!H\!R^{10}(x)$,
  \item[] where $S\!H\!R^n(x)$ is right shift of $x$ by $n$ positions.
\end{itemize}
%
%
\begin{figure}[htbp]
    \centering
    \includegraphics[width=0.9\textwidth]{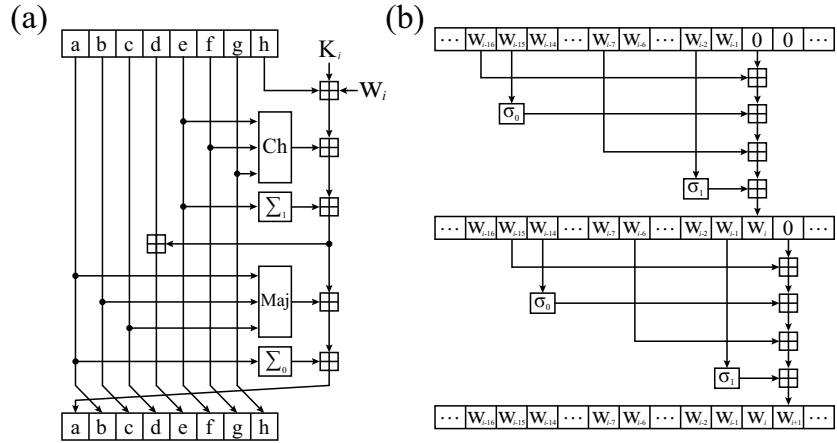}
    \caption{(a) Round operations and (b) message schedule of SHA-256 algorithm. Each square box accommodates 32-bit word. The symbol $\boxplus$ is addition modulo $2^{32}$.
    Note that new word can be overwritten on existing word if the word has already been fed to round.}
    \label{fig:sha2_algorithm}
\end{figure}

\subsection{Quantum Resource Estimates}
Quantum resource estimates of Shor's period finding algorithm have long been studied in the various literature.
See for example~\cite{adderlists,ecdlp17} and referenced materials therein.
On the other hand, quantitative quantum analysis on cryptographic schemes other than period finding is still in its early stage.
Partial list may include attacks on multivariate-quadratic problems~\cite{mq16}, hash functions~\cite{sha16,revs15}, and AES~\cite{aes16}.
We introduce two of them which are the most relevant to our work.

\subsubsection{AES Key Search.}
Grassl et al. reported the quantum costs of AES-$k$ key search for $k\in\{128,192,256\}$ in the units of logical qubit and gate~\cite{aes16}.
In estimating the time cost, the author's focus was put on a specific gate called `T' gate and its depth, although the overall gate count was also provided.
Space cost was simply estimated as the total number of qubits required to run Grover's algorithm.

There are two points we pay attention on.
First is that the authors ensured a single target key.
Since AES algorithm works like a random function, there is non-negligible probability that a plaintext end up with the same ciphertext when encrypted by two different keys.
To avoid the cases, the authors encrypt $r ~(\in \{3,4,5\})$ plaintext blocks simultaneously to obtain $r$ ciphertexts so that only the true key results in given ciphertexts.
The procedure removes the ambiguity in the number of iterations.
Note, however, that the removal of the ambiguity comes in exchange of at least tripling the space cost.
The other point is that reversible circuit implementation of internal functions of AES was always aimed at reducing the number of qubits.
One may see proposed circuit design as space-optimized.

\subsubsection{SHA-2 and SHA-3 Pre-Image Search.}
Amy et al. reported the quantum costs of SHA-2 and SHA-3 pre-image search in the units of logical and physical qubit and gate~\cite{sha16}.
The method considers an error-correction scheme called surface code.
Time cost was set considering the scheme.
Estimating the costs of T gates in terms of physical resources was one of the main results.
One point we would like to address in the work is that random-like behavior of SHA function was not considered.
It is assumed in the paper that the unique pre-image of a given hash exits.


\section{Trade-offs in Query Complexity}\label{sec:expected_iter}
The definitions of search problems involving random function are given.
For each problem, the expected iteration number
of the corresponding quantum search algorithm is calculated.
Finally, the trade-offs equations between the number of iterations and the number of quantum machines are given.

\subsection{Types of Search Problems}
We assume that $f\from X\to Y$ is random function which means $f$ is selected from the set of all functions from $X$ to $Y$ uniformly at random. Useful statistics of random function can be found in~\cite{random-mapping}. The probabilities related to the number of pre-images are quoted below. When an element $x$ is selected from a set $X$ uniformly at random, it is denoted by $x \rs X$.

When $|Y|=N$ and $|X|=aN\in\mathbb{N}$ for some $a\in\mathbb{Q}$, an element $y\in Y$ is called a \emph{$j$-node} if it has $j$ pre-images, i.e., $\left|\{x\in X:f(x)=y\}\right|=j$. For $y \rs Y$, the probability of $y$ to be a $j$-node, denoted by $q_{(aN)}(j)(j\geq0)$, is approximately
\eqnsp
\begin{align}\label{eq:jnode_prob}
    q_{(aN)}(j)=\frac{1}{e^a}\cdot\frac{a^j}{j!} \enspace.
\end{align}
For $x \rs X$, the probability of $f(x)$ to be a $j$-node, denoted by $r_{(aN)}(j)(j\geq1)$, is approximately
\eqnsp
\begin{align}\label{eq:inode-prob-domain}
    r_{(aN)}(j) = j \cdot q_{(aN)}(j) \enspace.
\end{align}

The target function in cryptanalytic search problems is usually modeled as a pseudo-random function (PRF) or a cryptographic hash function (CHF). The precise interpretation of this notions can be found in Sect. 3.5 and Sect. 5.5 of \cite{KL07}. It can be assumed that PRF and CHF have similar statistic behaviors to random function.

The formal definitions of search problems relevant to symmetric cryptanalysis can be described with random function.
The way of generating the given information in each problem is carefully distinguished. The first is key search problem which comes from the secret key search problem using a pair of plaintext and ciphertext of an encryption algorithm.

\begin{definition}[Key Search (KS)]\label{def:ks}
For a fixed $f\from X\to Y$, $y=f(x_0)$ generated from an element $x_0 \rs X$ is given. \emph{Key Search} is to find \emph{the} target $x_0$.
\end{definition}

The existence of the target $x_0$ in $X$ is always ensured. However pre-images of $y$ other than $x_0$ can be found, which is called a \emph{false alarm}. The false alarms have to be resolved by additional information since no clue (that helps to recognize the real target) is given within the problem.
Handling of false alarms is assumed not to consume quantum resources.

%

Definitions coming from the pre-image and the collision problems of CHF are given as follows.

\begin{definition}[Pre-image Search (PS)]\label{def:ps}
For a fixed $f\from\aset\to Y$, $y$ is chosen at random, $y \rs Y$, or equivalently, $y=f(x_0)$ for $x_0 \rs \aset$. \emph{Pre-image Search} is to find \emph{any} $x$ satisfying $f(x)=y$ for given $y$.
\end{definition}

There is no false alarm in pre-image search. However, the existence of a pre-image in a fixed subset of $\aset$ cannot be ensured.

\begin{definition}[Collision Finding (CF)]\label{def:cf}
For a fixed $f\from\aset\to Y$, \emph{Collision Finding} is to find \emph{any} input pair $(x_1,x_2)$ satisfying $f(x_1)=f(x_2)$.
\end{definition}

$\aset$ is imported since a domain of reasonable size including the original pre-image of a practically given hash value cannot be specified.


\subsection{Trade-offs in Grover's Algorithm for Key Search}
In this subsection, the expected iteration number and the parallelization trade-offs of Grover's algorithm are given. We assume that $|X|=|Y|=N$.

In key search problem, $y$ becomes $t$-node with probability $r(t)$ of (\ref{eq:inode-prob-domain}). The probability that one of the pre-images of $y$ is found by the measurement after $i$-times Grover iterations becomes $p_{t}(i)$ of (\ref{eq:suc_prob}). Since only one target among $t$ pre-images is the true key, the probability that the answer is correct is $1/t$. $P_{\rnd}^{\ks}(i)$ denotes the success probability after $i$-times Grover iterations of key search problem. To emphasize that $f$ is assumed to be a random function, the subscript `rand' is specified.
$P_{\rnd}^{\ks}(i)$ is the summation over possible $t$'s,
\eqnsp
\begin{align}\label{eq:suc_prob_rand_KS}
  P_{\rnd}^{\ks}(i)=\sum_{t\geq1}r(t)\cdot p_{t}(i) \cdot \frac{1}{t} \enspace.
\end{align}

Proposition about the optimal expected iterations follows.

\begin{proposition}\label{prop:iter_rand_KS}
The optimal expected number, $I_{\rnd}^{\ks}$ of Grover iterations for key search problem of random function 
becomes
\eqnsp
\begin{align*}
  I_{\rnd}^{\ks} = 0.951\ldots\cdot\sqrt{N} \enspace.
\end{align*}
\end{proposition}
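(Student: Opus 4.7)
The plan is to substitute the random-function statistics for $a=1$ into $P_{\rnd}^{\ks}(i)$, simplify the resulting Poisson series, and then numerically minimize $i/P_{\rnd}^{\ks}(i)$ in the large-$N$ regime.

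First I would observe that with $|X|=|Y|=N$ one has $a=1$, so by (\ref{eq:jnode_prob})--(\ref{eq:inode-prob-domain}) the weights satisfy $r(t)/t = q(t) = 1/(e\cdot t!)$. Inserting this together with (\ref{eq:suc_prob}) into (\ref{eq:suc_prob_rand_KS}) gives the compact form
\begin{equation*}
  P_{\rnd}^{\ks}(i) \;=\; \frac{1}{e}\sum_{t\geq1}\frac{\sin^{2}\!\left[(2i+1)\,\theta_{t,N}\right]}{t!}.
\end{equation*}
Intuitively, the $1/t$ penalty for singling out the true key among $t$ equally valid pre-images collapses the domain-side distribution $r(t)$ back to the codomain distribution $q(t)$, which is already Poisson.

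Next I would rescale with $u=(2i+1)/\sqrt{N}$. Using $\sin\theta_{t,N}=\sqrt{t/N}$ together with the expansion $\theta_{t,N}=\sqrt{t/N}+O(N^{-3/2})$, each sine argument becomes $u\sqrt{t}$ up to $O(1/\sqrt{N})$; the tail at large $t$, where this expansion degrades, is damped by the super-factorial decay of $1/t!$. In the rescaled variable,
\begin{equation*}
  \frac{i}{P_{\rnd}^{\ks}(i)} \;\approx\; \frac{e\sqrt{N}}{2}\cdot\frac{u}{\sum_{t\geq1}\sin^{2}(u\sqrt{t})/t!},
\end{equation*}
so the minimization decouples entirely from $N$. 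I would then truncate the series (five or six terms suffice to several decimal places), differentiate in $u$, and locate the smallest positive critical point $u^{*}$ at which the ratio is a local minimum; a second-derivative check confirms it is not a maximum at the first Grover peak.

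The main obstacle is technical rather than conceptual: there is no closed form for $u^{*}$, so the constant $0.951\ldots$ has to be certified by combining a rigorous truncation bound on the Poisson tail (the contribution from $t>T$ is bounded by $1/(e\cdot T!)$ times a constant) with interval arithmetic around the root-finding for $u^{*}$. Solving the equation $(d/du)\bigl[u\big/\sum_{t\geq1}\sin^{2}(u\sqrt{t})/t!\bigr]=0$ yields $u^{*}\approx 0.9$ with minimum ratio close to $0.7$, and hence $I_{\rnd}^{\ks}\approx (e/2)\cdot 0.7\cdot\sqrt{N}\approx 0.951\cdot\sqrt{N}$, matching the stated value.
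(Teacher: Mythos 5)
Your proposal is correct and follows essentially the same route as the paper: the paper likewise defines $I_{\rnd}^{\ks}(i)=i/P_{\rnd}^{\ks}(i)$ and finds the first positive root of its derivative numerically (citing the analogous computation in Sect.~4 of \cite{BBHT98}), obtaining $i_{\rnd}^{\ks}=0.434\ldots\cdot\sqrt{N}$, which matches your $u^{*}\approx0.87$ after the substitution $u=(2i+1)/\sqrt{N}$. Your explicit reduction of the weights to the Poisson form $q(t)=1/(e\cdot t!)$ and the rescaling that decouples the minimization from $N$ are just a more detailed spelling-out of the same numerical argument.
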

\begin{proof}
This proof is similar to the one in Sect.~4 of~\cite{BBHT98}.

If the measurement is made after $i$-times Grover iterations, the expected number of iterations can be expressed as a function of $i$, denoted by $I_{\rnd}^{\ks}(i)$, which reads 
\eqnsp
\begin{align*}
  I_{\rnd}^{\ks}(i)=\frac{i}{P_{\rnd}^{\ks}(i)} \enspace.
\end{align*}
The optimal value, $I_{\rnd}^{\ks}$ is the first positive local minimum value of $I_{\rnd}^{\ks}(i)$. The first positive root of derivative of $I_{\rnd}^{\ks}(i)$, denoted by $i_{\rnd}^{\ks}$, can be calculated by a numerical analysis such as Newton's method. The result is $i_{\rnd}^{\ks} = 0.434\ldots\cdot\sqrt{N}$ and $I_{\rnd}^{\ks}=I_{\rnd}^{\ks}(i_{\rnd}^{\ks})$.$\qed$
\end{proof}

Comparing $I_{\rnd}^{\ks}$ with $I_{1}$, the expected iteration increases by 37.8\ldots\%.

The parallel trade-offs curve of key search problem is calculated in the rest of this subsection.
If inner parallelization method is taken for $S_q\gg1$, the number of pre-images of $y$ in each divided space becomes only 0 or 1 for overwhelming probability, even though $f$ is random function. Therefore the success probability after $i$-times iterations, denoted by $P_{\rnd}^{\ks:\ip}(i)$, reads
\eqnsp
\begin{align}\label{eq:prob_rand_ip}
  P_{\rnd}^{\ks:\ip}(i)=P_{\rnd,N}^{\ks:\ip}(i)=p_{1,(N/S_q)}(i) \enspace,
\end{align}
from (\ref{eq:suc_prob}). The optimal expected iteration number is similar to  (\ref{eq:iter_t_val}) as
\eqnsp
\begin{align}
  \label{eq:iter_rand_ip_val_KS}
  I_{\rnd}^{\ks:\ip} = 0.690\ldots\cdot\sqrt{(N/S_q)} \enspace.
\end{align}

In outer parallelization method, the success probability after $i$-times iterations becomes
\eqnsp
\begin{align*}
P_{\rnd}^{\ks:\op}(i)=1-\left(1-P_{\rnd}^{\ks}(i)\right)^{S_q}\enspace,
\end{align*}
and then the optimal expected iteration number for $S_q\gg1$ is given by
\eqnsp
\begin{align}
  \label{eq:iter_rand_op_val_KS}
  I_{\rnd}^{\ks:\op} = 0.784\ldots\cdot\sqrt{(N/S_q)} \enspace.
\end{align}

As a result, inner parallelization is $11.9\ldots$\% more efficient than outer method in key search problem.
We denote the number of machines used in key search problem $S_q^{\ks}$.
The optimal expected number of iterations in key search problem, denoted by $T_q^{\ks}$, can be considered as $I_{\rnd}^{\ks:\ip}$.

\begin{proposition}[KS trade-offs curve]\label{prop:to-KS}
For $S_q^{\ks}\gg1$, the parallelization trade-offs of Grover's algorithm for key search of random function is given by
\eqnsp
\begin{align*}
  (T_q^{\ks})^2S_q^{\ks}=0.476\ldots\cdot N \enspace.
\end{align*}
\end{proposition}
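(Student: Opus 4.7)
The plan is to derive the trade-off identity as a direct consequence of the iteration count for inner parallelization already established in~(\ref{eq:iter_rand_ip_val_KS}). First I would observe that among the two parallelization options discussed, inner parallelization is strictly preferred for key search: comparing (\ref{eq:iter_rand_ip_val_KS}) and (\ref{eq:iter_rand_op_val_KS}), IP attains the smaller leading constant ($0.690\ldots$ versus $0.784\ldots$), so an adversary optimizing time for any fixed $S_q^{\ks}$ will always pick IP. This is exactly the identification $T_q^{\ks} = I_{\rnd}^{\ks:\ip}$ that the paragraph preceding the proposition already records.

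Next I would verify that the hypothesis underlying~(\ref{eq:prob_rand_ip})---that each of the $S_q^{\ks}$ subdomains contains at most one pre-image of the given $y$ with overwhelming probability---holds in the asymptotic regime $S_q^{\ks} \gg 1$. Since $f$ is random with $|X|=|Y|=N$, the expected number of pre-images of $y$ lying in a single subdomain of size $N/S_q^{\ks}$ is $1/S_q^{\ks}$. A Poisson/binomial tail estimate (or directly the $j$-node statistics of~(\ref{eq:jnode_prob})) gives probability $O((S_q^{\ks})^{-2})$ that a fixed subdomain carries two or more pre-images, and a union bound over the $S_q^{\ks}$ subdomains keeps the total error $O((S_q^{\ks})^{-1})$. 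Hence the per-machine success probability is $p_{1,(N/S_q^{\ks})}(i)$ to leading order, and~(\ref{eq:iter_rand_ip_val_KS}) applies.

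The identity is then a one-line algebraic step: squaring $T_q^{\ks} = 0.690\ldots \cdot \sqrt{N/S_q^{\ks}}$ and multiplying through by $S_q^{\ks}$ yields $(T_q^{\ks})^2 S_q^{\ks} = (0.690\ldots)^2 \cdot N = 0.476\ldots \cdot N$. The constant $0.690\ldots$ is inherited from~(\ref{eq:iter_t_val}) via the reduction to the unique-target case inside each subdomain, so no further numerical minimization is needed.

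The main obstacle is really the preliminary justification rather than the final computation: one has to be confident that combining IP and OP (or more exotic hybrids) cannot beat pure IP, and that the ``0 or 1 pre-image per subdomain'' approximation introduces only a vanishing correction to the $0.690\ldots$ constant. Once those two points are in hand, the closing algebra is immediate.
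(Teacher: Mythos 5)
Your proposal is correct and follows essentially the same route the paper takes: the proposition is a direct consequence of identifying $T_q^{\ks}$ with the inner-parallelization iteration count $I_{\rnd}^{\ks:\ip}=0.690\ldots\cdot\sqrt{N/S_q^{\ks}}$ from~(\ref{eq:iter_rand_ip_val_KS}) (justified by IP beating OP, $0.690\ldots<0.784\ldots$) and squaring, since $(0.690\ldots)^2=0.476\ldots$. Your added tail-bound justification of the ``at most one pre-image per subdomain'' approximation is a harmless elaboration of what the paper asserts as holding with overwhelming probability.
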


In the followings, the optimal expected number of iterations and trade-offs curves are defined and analyzed in the same way as in this subsection, but briefly.


\subsection{Trade-offs in Grover's Algorithm for Pre-image Search}\label{subsec:PS}
Let $X$ be the domain of the function $f$, and assume $|X|=|Y|=N$.
In pre-image search problem, there exit $t$ pre-images of $y$ with probability $q(t)$ in (\ref{eq:jnode_prob}). The success probability of measuring one of the targets after $i$-times iterations is a summation of $q(t)\cdot p_t(i)$ over $t$ as 
\eqnsp
\begin{align*}
  P_{\rnd}^{\pis}(i)=\sum_{t\geq0}q(t)\cdot p_t(i) \enspace.
\end{align*}
Since $p_0(i)=0$ and $q(t)=r(t)/t$ for $t\geq1$, it can be written as $P_{\rnd}^{\pis}(i)=P_{\rnd}^{\ks}(i)$. The important difference between the key search and the pre-image search is the existence of failure probability. If the domain of size $N$ is used, the probability there is no pre-image of $y$ in $X$ is $q(0)=1/e\approx 0.368\ldots$.

Two resolutions can be sought for the failure. The first is to change the domain $X$ in every execution of Grover's algorithm. In this case, the result on the optimal iteration number of pre-image search becomes the same as Proposition~\ref{prop:iter_rand_KS}. 
The second is to expand the domain size, $|X|=aN\in\mathbb{N}$ for some $a>1$. The success probability then reads $P_{\rnd,(aN)}^{\pis}(i) = \sum_{t\geq1}q_{(aN)}(t)\cdot p_{t,(aN)}(i)$. 

\begin{proposition}\label{prop:iter_rand_PS_ld}
If $|X|\gg N$, the optimal expected number of iterations, denoted by $I_{\rnd,(\gg N)}^{\pis}$, for pre-image search problem is written as
\eqnsp
\begin{align*}
  I_{\rnd,(\gg N)}^{\pis} = 0.690\ldots\cdot\sqrt{N} \enspace.
\end{align*}
\end{proposition}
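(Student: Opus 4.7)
The plan is to show that when $a=|X|/N\gg 1$, the Poisson distribution $q_{(aN)}(t)=e^{-a}a^t/t!$ governing the number of pre-images concentrates sharply around its mean $a$, so that the averaged success probability collapses to the single-target formula $p_{1,N}(i)$. The optimization already carried out for (\ref{eq:iter_t_val}) with $t=1$ then yields the quoted constant $0.690\ldots\cdot\sqrt{N}$.

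Concretely, first note that the ``no pre-image'' mass $q_{(aN)}(0)=e^{-a}$ is exponentially small in $a$, which is precisely why enlarging $X$ eliminates the failure mode motivating this case. For $t\geq 1$, since $t$ is Poisson with mean $a$ and standard deviation $\sqrt{a}$, the relative deviation $(t-a)/a$ is $O(1/\sqrt{a})$ with overwhelming probability. Therefore
\[
\theta_{t,aN}=\arcsin\sqrt{t/(aN)}=\frac{1}{\sqrt{N}}\bigl(1+O(1/\sqrt{a})\bigr),
\]
and for $i$ in the relevant range $i=O(\sqrt{N})$, $(2i+1)\theta_{t,aN}\to(2i+1)/\sqrt{N}$ uniformly on compact $i$-intervals. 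Plugging this back into
\[
P_{\rnd,(aN)}^{\pis}(i)=\sum_{t\geq1}q_{(aN)}(t)\sin^2\!\bigl[(2i+1)\theta_{t,aN}\bigr]
\]
and using $\sum_{t\geq1}q_{(aN)}(t)=1-e^{-a}\to 1$, I conclude $P_{\rnd,(aN)}^{\pis}(i)\to p_{1,N}(i)=\sin^2[(2i+1)/\sqrt{N}]$ as $a\to\infty$. Hence $I_{\rnd,(aN)}^{\pis}(i)=i/P_{\rnd,(aN)}^{\pis}(i)$ converges to $I_{1,N}(i)$, whose first positive local minimum is $I_{1,N}=0.690\ldots\cdot\sqrt{N}$ attained at $i_{1,N}=0.583\ldots\cdot\sqrt{N}$ by (\ref{eq:iter_t_val}) specialized to $t=1$.

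The main obstacle is justifying the interchange of the $a\to\infty$ limit with the infinite sum over $t$ and with the minimization over $i$. The first is handled by dominated convergence, since $|\sin^2|\leq 1$ and $\sum_{t}q_{(aN)}(t)$ is bounded; Poisson tail estimates sharpen this into an explicit $O(a^{-1/2})$ rate uniform on compact $i$-intervals. The second then follows from the uniform convergence on $[0,c\sqrt{N}]$: both the argmin and the minimum value of $I_{\rnd,(aN)}^{\pis}$ converge to those of $I_{1,N}$, delivering exactly the constant $0.690\ldots$.
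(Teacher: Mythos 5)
Your argument is correct, and it fills in a proof the paper does not actually supply: Proposition~\ref{prop:iter_rand_PS_ld} is stated without proof, and the paper's template for such claims (cf.\ the proof of Proposition~\ref{prop:iter_rand_KS}) is to form $P(i)$, define $I(i)=i/P(i)$, and locate the first positive local minimum numerically; the remark that the proposition ``can be assumed to hold for $a\geq 2^{10}$'' when $N=2^{256}$ indicates the authors verified convergence empirically rather than analytically. Your route instead proves the limit: since $t$ is Poisson$(a)$ the ratio $t/(aN)$ concentrates at $1/N$ with relative error $O(a^{-1/2})$, so $P_{\rnd,(aN)}^{\pis}(i)\to p_{1,N}(i)$ uniformly for $i=O(\sqrt N)$, and the minimization passes to the limit, landing exactly on the single-target constant $0.690\ldots$ of (\ref{eq:iter_t_val}) with $t=1$. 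What this buys over the paper's implicit numerics is (i) an explanation of \emph{why} the limiting constant is the single-target one (the Poisson mean $a$ cancels against the enlarged domain $aN$), and (ii) an explicit $O(a^{-1/2})$ rate that is consistent with, and sharpens, the paper's empirical threshold on $a$. One small point of hygiene: $p_{1,N}(i)=\sin^2[(2i+1)\arcsin(N^{-1/2})]$ rather than $\sin^2[(2i+1)N^{-1/2}]$, but the discrepancy in the argument is $O(N^{-1})$ over the relevant range of $i$ and does not affect the constant; it would be worth one line to say so.
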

When $N=2^{256}$, the proposition can be assumed to hold for $a\geq2^{10}$. Subscript `$\gg1$' specifies the assumption. The fact that $I_{\rnd,(\gg N)}^{\pis}\approx I_{1,N}$, i.e., better performance up to some converged value for larger domain size, is remarked. If $a$ grows to $8$, the inversely proportional failure probability decreases below $0.0004\ldots\approx1/e^8$.

In the case of inner parallelization for $|X|=|Y|$, the pre-images of $y$ are distributed to different divided spaces with overwhelming probability when $S_q\gg1$. The success probability reads 
\eqnsp
\begin{align*}
  P_{\rnd,N}^{\pis:\ip}(i)=\sum_{t\geq1}q(t)\cdot\left\{1-\left(1-p_{1,(N/S_q)}(i)\right)^t\right\} \enspace.
\end{align*}
and the optimal expected iteration number is written as
\eqnsp
\begin{align}
  \label{eq:iter_rand_ip_val_PS}
  I_{\rnd,N}^{\pis:\ip} = 0.981\ldots\cdot\sqrt{(N/S_q)} \enspace. 
\end{align}

Since $\!P_{\rnd}^{\pis}\!(i)\!\!=\!\!P_{\rnd}^{\ks}\!(i)$, the behavior of outer parallelization of pre-image search is the same as in key search. The optimal expected iteration number is
\eqnsp
\begin{align}\label{eq:iter_rand_op_val_PS}
  I_{\rnd}^{\pis:\op}=0.784\ldots\cdot\sqrt{(N/S_q)} \; \left(=I_{\rnd}^{\ks:\op} \right) \enspace. 
\end{align}

%
%

When $|X|=aN\in\mathbb{N}$ for some $a>1$, if $S_q>a^2$, it can be assumed that all pre-images of $y$ are separately distributed to the divided space in inner parallelization. For both of inner and outer parallelization, the optimal expected iteration converges to the value of (\ref{eq:iter_rand_op_val_PS}) when $a\gg1$ and $S_q>a^2$.

There are subtleties in comparing inner and outer parallelization which are inappropriate to be pointed out here.
We conclude that it is always favored to enlarge the domain size, and then for large $S_q$, two parallelization methods show asymptotically the same performance.
Denoting the optimal time and space complexities for pre-image search problem by $T_q^{\pis}$ and $S_q^{\pis}$, the trade-offs curve is given as follows.

\begin{proposition}[PS trade-offs curve]\label{prop:to-PS}
For $S_q^{\pis}\gg1$, the parallelization trade-offs of Grover's algorithm for pre-image search of random function is given by
\eqnsp
\begin{align*}
(T_q^{\pis})^2S_q^{\pis} = 0.614\ldots\cdot N \enspace.
\end{align*}
\end{proposition}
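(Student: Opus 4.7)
The plan is to read the claimed trade-offs curve directly off the optimal per-processor iteration counts already computed in Section~\ref{subsec:PS}. Inner parallelization on the native-sized domain gave $I_{\rnd,N}^{\pis:\ip}=0.981\ldots\sqrt{N/S_q}$ in (\ref{eq:iter_rand_ip_val_PS}), outer parallelization gave $I_{\rnd}^{\pis:\op}=0.784\ldots\sqrt{N/S_q}$ in (\ref{eq:iter_rand_op_val_PS}), and the discussion immediately preceding the proposition established that on an enlarged domain $|X|=aN$ with $a\gg1$ and $S_q>a^2$ both methods converge to the latter constant. Hence the best achievable time per processor in the regime $S_q^{\pis}\gg 1$ is $T_q^{\pis}=0.784\ldots\sqrt{N/S_q^{\pis}}$, and squaring and multiplying by $S_q^{\pis}$ immediately produces $(T_q^{\pis})^2 S_q^{\pis}=(0.784\ldots)^2\,N = 0.614\ldots\,N$.

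First I would argue that the adversary selects whichever parallelization strategy yields the smaller coefficient, so the trade-offs curve is determined by the minimum of the two per-processor costs. On the native domain the outer method is already better ($0.784\ldots<0.981\ldots$), and enlarging the domain only pulls the inner method down to that same limit rather than below it; in particular the outer constant $0.784\ldots$ is tight, not a loose upper bound. Combined with the parallel speed-up factor $\sqrt{S_q^{\pis}}$, this fixes the trade-offs curve up to the numerical constant to be squared.

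The only non-routine step is justifying that the convergence of $I_{\rnd,(aN)}^{\pis:\ip}$ and $I_{\rnd,(aN)}^{\pis:\op}$ to $0.784\ldots\sqrt{N/S_q}$ is a genuine equality in the limit, rather than a one-sided bound that could leak a constant. This amounts to reapplying the Newton's-method argument of Proposition~\ref{prop:iter_rand_KS} to the enlarged-domain success probability $P_{\rnd,(aN)}^{\pis}(i)=\sum_{t\geq 1}q_{(aN)}(t)\,p_{t,(aN)}(i)$ and checking that the Poisson tail contributes only $O(e^{-a})$ corrections to both numerator and denominator of $i/P(i)$, so that the first positive root of its derivative converges to the same minimizer $i_{\rnd}^{\ks}$ of the outer-parallel expression. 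Once that verification is done, no constant is lost in the limit, and squaring gives the stated coefficient $(0.784\ldots)^2=0.614\ldots$, completing the proof.
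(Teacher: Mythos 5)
Your proposal is correct and follows essentially the same route as the paper: the paper states Proposition~\ref{prop:to-PS} as an immediate consequence of taking $T_q^{\pis}$ to be the converged optimal iteration count $0.784\ldots\cdot\sqrt{N/S_q^{\pis}}$ from (\ref{eq:iter_rand_op_val_PS}) (to which both inner and outer parallelization tend once the domain is enlarged and $S_q>a^2$), and squaring to get $(0.784\ldots)^2=0.614\ldots$. Your additional remark that the convergence of the enlarged-domain optimum to this constant deserves verification via the same numerical minimization of $i/P(i)$ is a fair observation about a step the paper asserts without detail, but it does not change the argument.
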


Note that while the inner parallelization is a better option in key search, both methods have similar behaviors in pre-image search.

\subsection{Trade-offs in Quantum Collision Finding Algorithms}
A collision could be found by using Grover's algorithm in the way of \emph{second pre-image} search. This has the same result as Sect. \ref{subsec:PS} if the input of the given pair of `first pre-image' is not included in the domain.
Apart from Grover's algorithm, the optimal expected iterations and trade-offs curves for parallelizations of two collision finding algorithms, GwDP and CNS, are given in this subsection.

In collision finding algorithms, searching for a pre-image of large size set is required.
Let $f\from\aset\to Y$, $X\subset\aset$ be a set of size $aN\in\mathbb{N}$. For $y \rs Y$, the expected number of pre-images of $y$ becomes $a=\sum_{j\geq1}j\cdot q_{(aN)}(j)$. If the size of a set $A\subset Y$ is large enough, it can be assumed that the number of pre-images of $A$, $|\{x\in X:f(x)\in A\}|=a\cdot|A|$.

\subsubsection{GwDP algorithm.}
Let $S_q=2^s$ and $X\subset\aset$ be a set of size $N$. In each quantum machine, a parameter $(n-2s+2)$ is used for the number of bits to be fixed in DPs.
The parameter $(n-2s+2)$ is chosen as an optimal one only among integers in order to allow the easier implementation by quantum gates.

After $i$-times Grover iterations, the success probability of measuring a DP becomes $p_{(2^{2s-2})}(i)$ from (\ref{eq:suc_prob}). The expected number of DPs found is $2^s\cdot p_{(2^{2s-2})}(i)$ by measurements after $i$-times iterations on each machine. As a result of \emph{birthday problem} (BP), known to be proposed by R. Mises in 1939, if there are $k$ samples independently selected out of $2^{2s-2}\,$ DPs, the probability of at least one coincidence, denoted by $p_{(2^{2s-2})}^{\bs}(k)$, is approximated as 
\eqnsp
\begin{align*}
  p_{(2^{2s-2})}^{\bs}(k)=1-\exp\left(\frac{-k^2}{2\cdot 2^{2s-2}}\right) \enspace.
\end{align*}
Details of approximation can be found in Sect. A.4 of~\cite{KL07}. The probability of finding at least one collision, denoted by $P_{\rnd}^{\cfg}(i)$, is then 
\eqnsp
\begin{align*}
  P_{\rnd}^{\cfg}(i)=p_{(2^{2s-2})}^{\bs}\left(2^s\cdot p_{(2^{2s-2})}(i)\right) \enspace.
\end{align*}
The optimal expected iteration reads
\eqnsp
\begin{align}\label{eq:iter_rand_GwDP}
  I_{\rnd}^{\cfg}=1.532\ldots\cdot\sqrt{N}/2^s \enspace.
\end{align}
Denoting the optimal time and space complexities by $T_q^{\cfg}$ and $S_q^{\cfg}$ for collision finding by GwDP algorithm, the trade-offs curve is given as follows.

\begin{proposition}[GwDP trade-offs curve]\label{prop:to-CF_GwDP}
 For $S_q^{\cfg}=2^s\gg1$, the trade-offs curve of {\rm GwDP} algorithm for random function is given by
\eqnsp
  \begin{align*}
    T_q^{\cfg}S_q^{\cfg}=1.532\ldots\cdot N^\frac{1}{2} \enspace.
  \end{align*}
\end{proposition}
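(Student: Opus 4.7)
The structure of the proof is almost entirely inherited from the preceding derivation, so my plan is primarily one of assembly rather than discovery. Equation~(\ref{eq:iter_rand_GwDP}) already gives the optimal per-machine iteration count $I_{\rnd}^{\cfg}=1.532\ldots\cdot\sqrt{N}/2^s$, and by the definition of the trade-offs variables $T_q^{\cfg}$ coincides with this value while $S_q^{\cfg}=2^s$. The proof therefore consists of nothing more than forming the product $T_q^{\cfg}S_q^{\cfg}=(1.532\ldots\cdot\sqrt{N}/2^s)\cdot 2^s=1.532\ldots\cdot N^{1/2}$ and observing that the $2^s$ factors cancel, removing all dependence on $s$ and yielding the claimed scaling.

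The only real content lies a step earlier, in justifying (\ref{eq:iter_rand_GwDP}) itself, and I would state this as the place where attention is due. The success probability $P_{\rnd}^{\cfg}(i)=p_{(2^{2s-2})}^{\bs}\!\bigl(2^s\cdot p_{(2^{2s-2})}(i)\bigr)$ is a composition of the Grover hitting probability $\sin^2[(2i+1)\theta_{1,2^{2s-2}}]$ with the birthday tail $1-\exp(-k^2/2^{2s-1})$. Following the template used in Proposition~\ref{prop:iter_rand_KS}, one would define $I_{\rnd}^{\cfg}(i)=i/P_{\rnd}^{\cfg}(i)$, differentiate, and locate the first positive root of the derivative numerically (for instance by Newton's method, as in the key-search case). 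Evaluating $I_{\rnd}^{\cfg}$ at that root reproduces the constant $1.532\ldots$, which is the only step that cannot be performed in closed form.

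The main conceptual point to flag, which distinguishes this trade-off curve from those of Propositions~\ref{prop:to-KS} and~\ref{prop:to-PS}, is why the exponent on $S_q$ is $1$ rather than the Grover-type value of $\tfrac12$. Each machine produces an expected $2^s\cdot p_{(2^{2s-2})}(i)$ distinguished points after $i$ iterations, so aggregating across $2^s$ machines gives a DP count that is itself linear in $S_q$; the birthday criterion then demands only $2^{s-1}$ samples from a space of $2^{2s-2}$ DPs, which explains why a single Grover iteration per machine is enough near optimality and why the product $T_qS_q$, rather than $T_q^2S_q$, stabilises at $\Theta(N^{1/2})$. I would expect no real obstacle beyond keeping this accounting consistent and confirming the numerical constant carried over from the per-machine analysis.
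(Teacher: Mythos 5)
Your proposal is correct and follows the paper's own (implicit) derivation exactly: the paper likewise obtains the proposition by taking $T_q^{\cfg}=I_{\rnd}^{\cfg}=1.532\ldots\cdot\sqrt{N}/2^s$ from (\ref{eq:iter_rand_GwDP}) --- itself the numerically located first minimum of $i/P_{\rnd}^{\cfg}(i)$ with $P_{\rnd}^{\cfg}(i)=p_{(2^{2s-2})}^{\bs}\bigl(2^s\cdot p_{(2^{2s-2})}(i)\bigr)$ --- and multiplying by $S_q^{\cfg}=2^s$ so that the $2^s$ factors cancel. One caution about your closing commentary: it is the \emph{ensemble} of $2^s$ machines that yields an expected $2^s\cdot p_{(2^{2s-2})}(i)$ distinguished points (each machine contributes one measurement with success probability $p_{(2^{2s-2})}(i)$), not each machine individually, and what suffices per machine is a single Grover \emph{run} of about $1.532\ldots\cdot\sqrt{N}/2^s$ iterations, not a single iteration.
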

Note that the algorithm also requires $S_c^{\cfg}=O(2^s)$ classical storage.

\subsubsection{CNS algorithm.}
In the list preparation phase, a list $L$ of size $2^l$, a subset of $d$-bit DPs, is to be made. Set $X_1\subset\aset$ of size $N=2^n$ and the function
\eqnsp
\begin{align*}
   f_{DP}(x) = \left\{
            \begin{array}{ll}
              1\enspace, & \hbox{~~if $f(x)$ is DP}\enspace, \\
              0\enspace, & \hbox{~~otherwise} \enspace.
            \end{array}
          \right.
\end{align*}
Let $f_{DP}|_{X_1}$ be the restriction of $f_{DP}$ on $X_1$.
Grover iteration in this phase is defined by $Q_{DP}=- A_{DP} S_0 A_{DP}^{-1} S_{f_{DP}|_{X_1}}$ where the oracle operator $S_{f_{DP}|_{X_1}}$ is a quantum implementation of the function $f_{DP}|_{X_1}$ and $A_{DP}$ is a usual state preparation operator $H^{\otimes n}$.

Since there are about $2^{n-d}\left(=|X_1|/2^d\right)$ DPs in $X_1$, the expected number of Grover iterations to find a DP is the same as $I_{2^{n-d}}=0.690\ldots\cdot2^{d/2}$ of (\ref{eq:iter_t_val}). The expected number of Grover iterations to build $L$ is $0.690\ldots\cdot2^{l+d/2}$. A classical storage of size $O(2^l)$ is required in addition.

In the collision finding phase, let $X_2\subset\aset$ be a set of size $N$ such that $X_1\cap X_2=\varnothing$.
Let the state $|\psi\rangle$ be an equal-phase and equal-weight superposition of states encoding all the DPs in $X_2$. State preparation operator $A_L$ such that $|\psi\rangle = A_L |0 \rangle$ is explicitly 
\eqnsp
\begin{align*}
 A_L = (- A_{DP} S_0 A_{DP}^{-1} S_{f_{DP}|_{X_2}})^{(\pi/4)\cdot2^{d/2}} A_{DP}\enspace,
\end{align*}
which is essentially Grover iterations similar to $Q_{DP}$ with repetition number $I_{2^{n-d}}^{\mxp}$ of (\ref{eq:iter_mp}).
The function $f_L\from X_2\to \{0,1\}$ is defined as
\eqnsp
\begin{align*}
   f_{L}(x) = \left\{
            \begin{array}{ll}
              1 \enspace, & \hbox{~~if $f(x)\in L$} \enspace, \\
              0 \enspace, & \hbox{~~otherwise} \enspace.
            \end{array}
          \right.
\end{align*}
To realize the oracle operator $S_{f_L}$$-$ a quantum implementation of $f_L$$-$ without a need for quantum memory, the authors have suggested a computational method taking $O(2^l)$ elementary operations per quantum $f_L$ query.

QAA iteration $Q_L$ of the collision finding phase consists of two steps. The first is acting of the oracle operator $S_{f_L}$. Let $t_L$ be the ratio of the time cost of $S_{f_L}$ per list element of $L$ to that of Grover iteration. The second step is acting of the diffusion operator $-A_LS_0A_L^{-1}$.

The success probability of QAA algorithm is known to have the same behaviors of Grover's algorithm~\cite{QAA}.
Since there are about $2^{n-d} \left(=2^n\cdot(1/2^d)\right)$ DPs encoded in the state with equal probabilities and about $2^l \left(=2^{n-d}\cdot(2^l/2^{n-d})\right)$ pre-images of $L$ in $\kett{\psi}$, by applying $Q_L$ operator $I_{(2^l),(2^{n-d})}=0.690\ldots\cdot2^{(n-d-l)/2}$ times on $\kett{\psi}$, the algorithm is expected to find a collision. 
The time cost of the collision finding phase reads $0.690\ldots\cdot2^{(n-d-l)/2}\cdot\left\{2\cdot(\pi/4)\cdot2^{d/2}+t_L\cdot2^l\right\}$. Note that the time cost of $S_0$ in collision finding phase and the initial $A_L$ are negligible.
The time cost of CNS algorithm in terms of Grover iterations denoted by $I_{\rnd}^{\cfc}(d,l)$ reads
\eqnsp
\begin{align}\label{eq:iter_rand_CNS}
  I_{\rnd}^{\cfc}(d,l) = 0.690\ldots\cdot2^{l+\frac{d}{2}} + 0.690\ldots\cdot2^{\frac{n-d-l}{2}} \left( \frac{\pi}{2}\cdot2^{\frac{d}{2}} + t_L\cdot2^l \right) \enspace.
\end{align}
The optimal value $I_{\rnd}^{\cfc}$ is given as follows.

\begin{proposition}\label{prop:iter_CF_CNS}
The optimal expected number of Grover iterations in CNS algorithm for collision finding of random function reads
\eqnsp
\begin{align*}
  I_{\rnd}^{\cfc} = 3.150\ldots\cdot t_L^{\frac{1}{5}} \cdot 2^{\frac{2}{5}n} \enspace,
\end{align*}
when $l = d/2 + \log_2 \left( \pi / (2 t_L) \right)$, and $d = 2/5 \{ n + \log_2 \left( (2 t_L)^3 / \pi  \right) \}$.
\end{proposition}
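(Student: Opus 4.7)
The plan is to verify the two stated conditions by balancing the three additive contributions to $I_{\rnd}^{\cfc}(d,l)$ and then substitute to obtain the prefactor. After distributing $2^{(n-d-l)/2}$ through the bracket in (\ref{eq:iter_rand_CNS}), I would write
\[
I_{\rnd}^{\cfc}(d,l) \;=\; c\cdot 2^{l+d/2} \;+\; c\cdot\tfrac{\pi}{2}\cdot 2^{(n-l)/2} \;+\; c\cdot t_L\cdot 2^{(n-d+l)/2}, \qquad c = 0.690\ldots,
\]
and label the three summands $T_1$ (list preparation), $T_2$ (diffusion of the QAA phase), and $T_3$ (oracle of the QAA phase). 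The two stated conditions correspond to natural balancings: $T_2 = T_3$ equalizes the two halves of a single QAA iteration, while $T_1 = T_2 + T_3$ equalizes the total costs of the list-preparation and collision-finding phases.

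Setting $(\pi/2)\cdot 2^{d/2} = t_L\cdot 2^l$ (the intra-iteration balance) immediately gives $l = d/2 + \log_2(\pi/(2 t_L))$, the first stated condition. Under this relation $T_2 = T_3$, so $T_2 + T_3 = c\pi\cdot 2^{(n-l)/2}$. Imposing the inter-phase balance $T_1 = T_2 + T_3$ then requires $2^{l+d/2} = \pi\cdot 2^{(n-l)/2}$, equivalently $3l + d = n + 2\log_2\pi$. Eliminating $l$ via the first condition and collecting logarithmic constants yields
\[
\tfrac{5d}{2} \;=\; n + 3\log_2(2 t_L) - \log_2\pi \;=\; n + \log_2\!\Bigl(\tfrac{(2 t_L)^3}{\pi}\Bigr),
\]
which is the second stated condition.

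To extract the prefactor, I would compute the exponent of $T_1$ using both conditions: a short manipulation gives $l + d/2 = (2n/5) + (1/5)\log_2(2\pi^3 t_L)$, so $T_1 = c\cdot (2\pi^3)^{1/5}\cdot t_L^{1/5}\cdot 2^{2n/5}$. Since $I_{\rnd}^{\cfc} = T_1 + T_2 + T_3 = 2 T_1$ at the balanced point, the total becomes $I_{\rnd}^{\cfc} = 2c\cdot (2\pi^3)^{1/5}\cdot t_L^{1/5}\cdot 2^{2n/5}$, and the numerical check $2\cdot 0.690\cdot (2\pi^3)^{1/5} \approx 2\cdot 0.690\cdot 2.284 \approx 3.150$ confirms the claim.

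The main obstacle is the careful bookkeeping of logarithms: arriving at $(2 t_L)^3/\pi$ inside the logarithm of the second condition requires the three copies of the additive constant $\log_2(2 t_L) = 1 + \log_2 t_L$ to be combined correctly when eliminating $l$ from $3l + d = n + 2\log_2\pi$. Similarly, reproducing exactly $3.150$ at the end requires keeping the factor of $2$ from the identity $I_{\rnd}^{\cfc} = 2 T_1$ separate from the fifth root $(2\pi^3)^{1/5}$ before multiplying by $c$.
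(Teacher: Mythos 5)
Your derivation is correct and takes the same (implicit) route as the paper: the paper supplies no explicit proof of this proposition, and both stated parameter conditions together with the constant $3.150\ldots = 2\cdot 0.690\ldots\cdot(2\pi^3)^{1/5}$ are exactly what your balancing $T_2=T_3$, $T_1=T_2+T_3$ of the three terms of $I_{\rnd}^{\cfc}(d,l)$ produces, so you have reconstructed the intended argument faithfully. One remark that applies equally to the paper: neither derivation checks that this balancing actually minimizes $I_{\rnd}^{\cfc}(d,l)$ (the stationary point of the sum sits at $T_1=T_3$ and $T_2=3T_1$, giving a slightly smaller constant), so the word ``optimal'' is justified only in the heuristic sense used throughout the paper, not as a defect of your reconstruction.
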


Using $S_q=2^s$ quantum machines, natural parallelization of the list preparation phase is finding $2^{l-s}$ elements on each machine.
Outer parallelization of QAA algorithm in the collision finding phase has the same expected iterations as (\ref{eq:iter_rand_op_val_PS}). The expected number of Grover iterations, denoted by $I_{\rnd}^{\cfc:\op}(d,l)$, where $s<\min(l,n-d-l)$, is written as
\eqnsp
\begin{align*}
  I_{\rnd}^{\cfc:\op}(d,l) = 0.690\ldots\cdot2^{l+\frac{d}{2}-s} + 0.784\ldots\cdot2^{\frac{n-d-l-s}{2}} \left( \frac{\pi}{2}\cdot2^{\frac{d}{2}} + t_L2^l \right) \enspace.
\end{align*}
When $l = d/2 + \log_2 \left( \pi / (2 t_L) \right)$, and $d =2/5 \{ n + s + \log_2 \left( 1.291\ldots \cdot (2 t_L)^3 / \pi  \right) \}$,
the optimal expected number of iterations reads
\eqnsp
\begin{align}\label{eq:iter_rand_CNS_para}
  I_{\rnd}^{\cfc:\op} = 3.488\ldots\cdot t_L^{\frac{1}{5}}2^{\frac{2}{5}n}2^{-\frac{3}{5}s} \enspace.
\end{align}
We denote the optimal time and space complexities by $T_q^{\cfc}$ and $S_q^{\cfc}$ for collision finding by CNS algorithm.
$T_q^{\cfc}$ can be considered as $I_{\rnd}^{\cfc:\op}$.
The trade-offs curve of CNS algorithm is then given as follows.

\begin{proposition}[CNS trade-offs curve]\label{prop:to-CF_CNS}
For $S_q^{\cfc}\gg1$, the parallelization trade-offs curve of CNS algorithm for random function is given by
\eqnsp
  \begin{align*}
    (T_q^{\cfc})^5(S_q^{\cfc})^3=(3.488\ldots)^5\cdot t_L\cdot N^2 \enspace .
  \end{align*}
\end{proposition}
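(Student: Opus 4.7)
The plan is to reduce the claim to a routine algebraic consequence of the already established iteration bound (\ref{eq:iter_rand_CNS_para}), coupled with the identifications $T_q^{\cfc} := I_{\rnd}^{\cfc:\op}$ (stated in the text just before the proposition) and $S_q^{\cfc} := 2^s$. Concretely, I would raise (\ref{eq:iter_rand_CNS_para}) to the fifth power to obtain $(T_q^{\cfc})^5 = (3.488\ldots)^5 \, t_L \, 2^{2n} \, 2^{-3s}$, then multiply both sides by $(S_q^{\cfc})^3 = 2^{3s}$ so that $2^{-3s}\cdot 2^{3s}=1$, leaving $(T_q^{\cfc})^5 (S_q^{\cfc})^3 = (3.488\ldots)^5 \, t_L \, N^2$ upon substituting $N = 2^n$. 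This is the claimed identity.

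Before that step, I would verify the expression (\ref{eq:iter_rand_CNS_para}) that already encodes the optimization. Treating $I_{\rnd}^{\cfc:\op}(d,l)$ as a function of the two free parameters $d,l$ (for fixed $s$), one balances the two terms inside the parenthesis factor: setting $\partial I_{\rnd}^{\cfc:\op}/\partial l = 0$ forces $\tfrac{\pi}{2}\cdot 2^{d/2}$ and $t_L\cdot 2^l$ to be of the same order, which yields $l = d/2 + \log_2\!\bigl(\pi/(2 t_L)\bigr)$. Substituting back and then setting $\partial I_{\rnd}^{\cfc:\op}/\partial d = 0$ equates the list-preparation cost with the collision-finding cost, producing $d = (2/5)\{n + s + \log_2(1.291\ldots \cdot (2 t_L)^3/\pi)\}$. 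Re-substituting into the expression for $I_{\rnd}^{\cfc:\op}(d,l)$ and collecting the numerical factors (in particular folding in the outer-parallelization constant $0.784\ldots$ from (\ref{eq:iter_rand_op_val_PS}) in place of the non-parallel $0.690\ldots$ used in Proposition~\ref{prop:iter_CF_CNS}) yields the prefactor $3.488\ldots$ and the exponents $t_L^{1/5} \, 2^{(2/5)n} \, 2^{-(3/5)s}$.

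The regime $S_q^{\cfc} \gg 1$ enters through the use of (\ref{eq:iter_rand_op_val_PS}) for outer parallelization of the QAA collision-finding phase, and through the side condition $s \le \min(l,\, n-d-l)$ stated just before (\ref{eq:iter_rand_CNS_para}); at the optimal $(d,l)$ found above this condition reduces to a simple bound on $s$ relative to $n$, which is consistent with the asymptotic regime. The main obstacle I anticipate is not the final algebra but the bookkeeping of the two-variable optimization: verifying that the critical point $(d,l)$ is an interior minimum rather than being forced onto the boundary of the feasible region, and carefully tracking the $t_L$-dependence through both terms of $I_{\rnd}^{\cfc:\op}(d,l)$ so that the claimed prefactor $(3.488\ldots)^5\cdot t_L$ (linear, not some other power of $t_L$) emerges after taking the fifth power. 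Once these points are confirmed, the trade-offs curve follows immediately.
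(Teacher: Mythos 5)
Your proposal is correct and follows essentially the same route as the paper: the proposition is an immediate algebraic consequence of (\ref{eq:iter_rand_CNS_para}) together with the identifications $T_q^{\cfc}=I_{\rnd}^{\cfc:\op}$ and $S_q^{\cfc}=2^s$, obtained by raising to the fifth power and cancelling $2^{-3s}\cdot2^{3s}$. Your preliminary re-derivation of (\ref{eq:iter_rand_CNS_para}) by balancing the two cost terms (including the origin of the constant $1.291\ldots$ as the squared ratio $({0.784\ldots}/{0.690\ldots})^2$ from the outer-parallelization prefactor) matches the optimal parameters the paper states without detailed proof.
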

The algorithms also requires the classical resource $S_c^{\cfc}=O\left(N^{1/5} (S_q^{\rm CNS})^{1/5}\right)$.
If the constant $t_L$ is determined, the time-space complexity of CNS algorithm could be derived from this trade-offs curve.

\section{Depth-Qubit Cost Metric}\label{sec:quantum_cost}
Universal quantum computers are capable of carrying out elementary logic operations such as Pauli X, Hadamard, CNOT, T, and so on.
See~\cite{QCQI} for details on quantum gates.
Implementation of any cryptographic operation in this paper is restricted such that it can only be realized by using these gates.
One may think of the restriction as a quantum version of software implementation in classical computing.
Quantum security of symmetric cryptosystems can then be estimated in units of elementary logic gates.

It is generally known that each elementary gate has different physical implementation time.
Considering various aspects of quantum computing, we suggest to simplify a measure of computation time and to ignore all the other factors or gates that complicates the analysis of quantum algorithms.

Two primary resources in quantum computing, circuit depth and qubit, can be exchanged to meet a certain attack design criteria.
Time-space complexity investigated in the previous section can be used to give an attribute `efficiency' to each and every design. 
To further quantify \textit{depth-qubit complexity} and to be able to rank the efficiency, we briefly cover the time-space trade-offs of quantum resources in this section.

\subsection{Cost Measure}\label{subsec:unit}
Difficulties often arise when it comes to setting quantum complexity measures that are physically interpretable.
There exists a number of factors making it complicate, for example totally different architecture each experimental group is pursuing.
A qubit or a certain gate may costs differently in each architecture.
It is therefore hardly possible to accurately assess operational time of each type of gate in general, and to estimates overall run time.
Despite the notable difficulty in quantifying the basic unit cost of quantum computation, a number of groups have attempted to estimate the algorithm costs in various applications~\cite{sha16,aes16,revs15}.
The cost metric varies depending on author's viewpoint.
For example, one considering the fault-tolerant computation would estimate the cost involving specific hardware implementations or error-correction schemes. 
On the other hand, one that is not to impose constraints on hardware or error-correction scheme 
would estimate the cost in logical qubits and gates.
The latter approach is adopted in this work.
Readers should keep in mind that this approach ignores the overheads introduced by fault-tolerance\footnote{Fault-tolerant cost could be in general huge, but we expect that logical cost to fault-tolerant cost conversion would be more or less uniform.}. 

High-level circuit description of Grover iteration involves not only elementary gates but also larger gates such as C$^k$NOT.
It is very unlikely that such gates can be directly operated in any realistic universal quantum computers.
Decomposition of those gates into smaller ones is thus required in practical estimates.

Determining the unit time cost is a subtle matter.
We would like to address that the simplest, yet justified time cost measure involves Toffoli gate.

\begin{definition}\label{def:cost-time}
A unit of quantum computational time cost is the time required to operate a non-parallelizable Toffoli gate.
\end{definition}

In other words, Toffoli-depth will be the time cost of the algorithm.
Three justifications can be given for the distinctness of Toffoli gates.
First, Toffoli (and single) gates are universal~\cite{deutsch89,shi02,toffoli80}.
Second, circuits consisting only of Clifford gates are not advantageous over classical computing, implying that the use of non-Clifford gates such as Toffoli is essential for quantum benefit~\cite{gottesman98}.
Third, logical Toffoli gates are the main source of time bottleneck~\cite{sha16,surface12,prx12} due to the magic state distillation process for T gates~\cite{kitaev05} which comes only from a decomposition of Toffoli gates in this work.
See for example in \cite{sha16}, the ratio of execution time in all Clifford gates to all T gates is about $0.0001\ldots$ in SHA-256
when fault-tolerance is considered.
To sum up we adopt universal Toffoli gate as the only non-Clifford gate, which is responsible for quantum speedup as well as main time bottleneck of circuits presented in this paper.
Moreover it is plausible to assume that multiple Toffoli gates can be applied to qubits simultaneously as long as their input/output qubits are independent, justifying Definition~\ref{def:cost-time}.

Space cost is estimated as a total number of logical qubits required to perform the quantum search algorithm.

\begin{definition}\label{def:cost-space}
Quantum computational space cost is the number of logical qubits required to run the entire circuit.
\end{definition}

Decomposition of a high-level circuit component into smaller ones often entails a need for additional qubits, which sometimes turn into garbage bits or get cleaned after certain operations.
Overall space cost mainly comes from these qubits.
To avoid confusion caused by terminology, we clarify five kinds of qubits.

\begin{enumerate}
  \item \textbf{Data qubits} are qubits of which the space 
      is searched by the quantum search algorithm. For example in AES-128, the size of the key space is $2^{128}$ which requires 128 data qubits.
  \item \textbf{Work qubits} are initialized qubits those assist certain operation. Whether it stays in an initialized value or gets written depends on the operation.
  \item \textbf{Garbage qubits} are previously initialized work qubits, which then get written unwanted information after a certain operation.
  \item \textbf{Output qubits} are previously initialized work qubits, which then get written the output information of a certain operation.
  \item \textbf{Oracle qubit} is a single qubit used for phase kick-back (sign change) in oracle and diffusion operators.
\end{enumerate}
There is one more type of qubit not falling into above categories; a borrowed qubit~\cite{haner17}.
The concept of the borrowed qubit is not considered in this work.
Garbage and output qubits must be re-initialized before the diffusion of Grover iteration to be disentangled from data qubits.

\subsection{Toffoli and T Gates} \commentout{APPDX}
Commonly acknowledged universal quantum gate set consists of Clifford gates and T gate.
As stated in the previous subsection, operational time costs of different gates may vary depending on architecture.
However, it is less disputable that physical implementation of T gate (or preparation of the magic state) is important, difficult, and generally more expensive than Clifford gates.
There are communities dedicated to better implementation of T~\cite{bravyi12,kitaev05,jones13,knill13} and reducing the number of the gates applied~\cite{roetteler13EC,roetteler13,svore12,mosca14,ecdlp17}, as it is time bottleneck in fault-tolerant quantum computing.

Toffoli gate is a non-Clifford gate that is composed of a few T and Clifford gates.
Taking Toffoli gate over T gate as a basic unit of time resource has its merits and demerits.
We cautiously compare the relation between Toffoli and T to the one between high- and low-level languages.
Example of implementation of a two-bit addition in terms of Toffoli and T gates are given in Fig.\;\ref{fig:Toffoli_vs_T}.
\begin{figure}[htbp]
    \centering
    \includegraphics[width=0.95\textwidth]{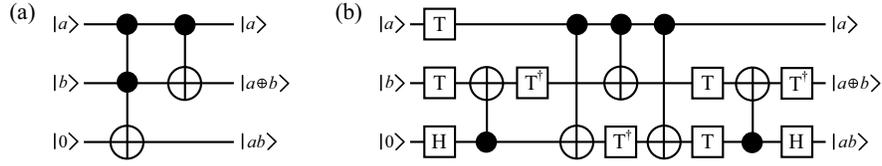}
    \caption{Addition of two bits $a$ and $b$ in terms of (a) Toffoli and (b) T gates (Fig. 7(d) in \cite{roetteler13}). The third qubit (output qubit) is written a carry. The third and the second qubits save the binary representation of $a+b$ as $ab \cdot 2^1 + (a\oplus b) \cdot 2^0$.}
    \label{fig:Toffoli_vs_T}
\end{figure}

Being reminded that Toffoli and CNOT operate as 
\eqnsp
\begin{align*}
{\rm TOF} |a \rangle |b \rangle |0 \rangle = |a \rangle |b \rangle |0\oplus ab \rangle \enspace,~
{\rm CNOT} |a \rangle |b \rangle = |a \rangle |a \oplus b \rangle \enspace,
\end{align*}
respectively, it is immediately noticeable from Fig.\;\ref{fig:Toffoli_vs_T}(a) that the circuit works as a two-bit addition operator.
The same operation realized by depth-optimized Clifford+T set~\cite{roetteler13} is described in Fig.\;\ref{fig:Toffoli_vs_T}(b).
Assuming that a given quantum computer can only perform gates in Clifford+T set, this circuit enables more transparent expectation of runtime.

Typically in previous studies a quantum algorithm is first implemented in Toffoli-level, and then the circuit undergoes a kind of `compilation' process that looks for an elementary-level circuit~\cite{sha16,ecdlp17}.
Finding an optimal compiling method is very complicated and worth researching~\cite{revs15}.
At this stage however, it is hardly possible to find true optimal elementary-level circuit from compiling huge high-level circuit.
In this work therefore, we stay in Toffoli-level implementation conforming the purpose of providing a general framework.

\subsection{Time-Space Trade-offs}\label{subsec:trade-offs}
Readers those are familiar with quantum circuit model can safely skip over this subsection as it covers some general facts about depth-qubit trade-offs.
In quantum circuit model, it is often possible to sacrifice efficiency in qubits for better performance in time and vice versa.
Quantum version of such time-space trade-offs forms a main body of Sect(s).~\ref{sec:AES} and~\ref{sec:SHA}.
As preliminary we give an example to introduce the general concept of trade-offs in quantum circuits.

Consider a function $f$ that carries out binary multiplications of $k$ single bit values.
At the end of this subsection we will deal with general $k$, but for now, let us explicitly write down the description with $k=2$, the multiplication of two bits $a$ and $b$ as $f(a, b) = ab$.
It is noticeable that the function $f$ can be implemented by AND gate in a classical circuit.
However in a quantum circuit where only unitary operations are allowed, similar implementation is prohibited since AND operation is not unitary as the input information $a$, $b$ cannot be retrieved by knowing $ab$ only.
Simple resolution can be found by keeping the input information all the way such that
\eqnsp
\begin{align}\label{eq:mul_2}
  U_f |a \rangle |b \rangle |0 \rangle = |a \rangle |b \rangle |0 \oplus ab \rangle \enspace ,
\end{align}
where $|a \rangle$ and $|b \rangle$ are quantum states encoding $a$ and $b$, and $U_f$ is the quantum implementation of the function $f$.
Previously zeroed qubit represented by the state $|0 \rangle$ on the left-hand side holds the result after the operation.
There exists a quantum gate that exactly performs the operation by $U_f$ called a $k$-fold controlled-NOT (C$^k$NOT) with $k=2$, or better known as Toffoli gate. Figure\;\ref{fig:cknot}(a) illustrates the graphical representation of C$^2$NOT gate achieving (\ref{eq:mul_2}).
General C$^k$NOT gates read $k$ input bits carried by wires intersecting with black dots and change a target bit carried by a wire intersecting with exclusive-or symbol.
In this case, the gate works as NOT on target bit if $a=b=1$ and identity otherwise.
\begin{figure}[htbp]
    \centering
    \includegraphics[width=0.8\textwidth]{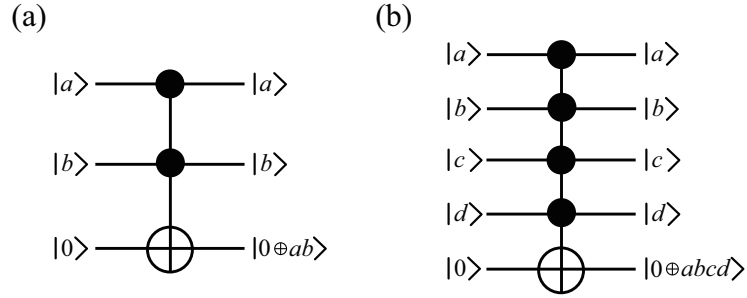}
    \caption{(a) C$^2$NOT (Toffoli) gate and (b) C$^4$NOT gate}
    \label{fig:cknot}
\end{figure}

Similarly, multiplications of four bits can be implemented by using C$^4$NOT gate as shown in Fig.\;\ref{fig:cknot}(b).
C$^4$NOT gate carries out NOT operation on target bit if $a=b=c=d=1$ and nothing otherwise.

Now assume we are to split up a C$^4$NOT gate into multiple Toffoli gates with the help of a few extra qubits.
Decomposing a large gate into smaller gates is a typical task one confront in compilation~\cite{revs15}.
There can be various ways to achieve the goal, and one of the immediate designs is the one in Fig.\;\ref{fig:cknot_split}(a).
\begin{figure}[htbp]
    \centering
    \includegraphics[width=0.95\textwidth]{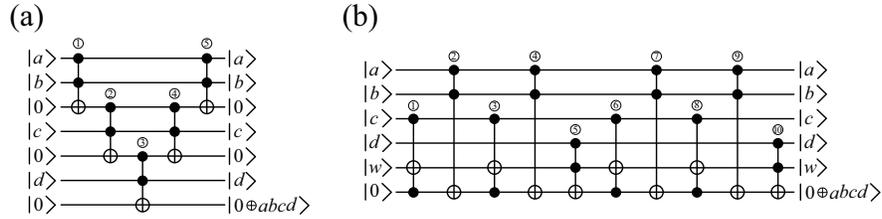}
    \caption{Decomposition of C$^4$NOT gate into (a) five Toffoli gates and (b) ten Toffoli gates. In (a), the third and the fifth zeroed qubits from the top are work qubits whereas in (b), only the fifth arbitrary-valued qubit is a work qubit.}
    \label{fig:cknot_split}
\end{figure}

Let us examine the action of each Toffoli gate on the register one-by-one,
\eqnsp
\begin{equation}\label{eq:decom_time}
\centering
\begin{tabular}{r l c l}
  & $|a \rangle |b \rangle |0 \rangle |c \rangle |0 \rangle |d \rangle |0 \rangle$
  & $\overset{\textrm{\circled{1}}}{\mapsto}$
  &$|a \rangle |b \rangle |ab \rangle |c \rangle |0 \rangle |d \rangle |0 \rangle$ \\
  $\overset{\textrm{\circled{2}}}{\mapsto}$~~
  &$|a \rangle |b \rangle |ab \rangle |c \rangle |abc \rangle |d \rangle |0 \rangle$
  &~~$\overset{\textrm{\circled{3}}}{\mapsto}$~~
  &$|a \rangle |b \rangle |ab \rangle |c \rangle |abc \rangle |d \rangle |abcd \rangle$ \\
  $\overset{\textrm{\circled{4}}}{\mapsto}$~~
  &$|a \rangle |b \rangle |ab \rangle |c \rangle |0 \rangle |d \rangle |abcd \rangle$
  &$\overset{\textrm{\circled{5}}}{\mapsto}$
  &$|a \rangle |b \rangle |0 \rangle |c \rangle |0 \rangle |d \rangle |abcd \rangle$ \enspace,
\end{tabular}
\end{equation}
where the circled number above the mapping arrow indicates the corresponding Toffoli gate in Fig.\;\ref{fig:cknot_split}(a).
The result actually comes out after $\circled{3}$, but we further perform a kind of un-computation with two extra Toffoli gates to re-initialize the work qubits.
It is up to users to decide whether the procedure should stop just after $\circled{3}$ at the cost of two garbage qubits being generated, or go all the way to the end of the circuit.
As one can notice, it is already the trade-offs.

A less straightforward decomposition can be found in Fig.\;\ref{fig:cknot_split}(b).
It makes use of twice as many Toffoli gates as Fig.\;\ref{fig:cknot_split}(a) but requires only a single arbitrary qubit\footnote{The first Toffoli gate in Fig.\;\ref{fig:cknot_split}(b) is redundant in this case, but needed if one wants to carry out $z \oplus abcd$, where $z$ is the initial value of the last qubit.}.
Each Toffoli gate transforms the state as follows.
\eqnsp
\begin{equation}\label{eq:decom_space}
\centering
\begin{tabular}{r l c l}
  & $|a \rangle |b \rangle |c \rangle |d \rangle |w \rangle |0 \rangle$
  &$\overset{\textrm{\circled{1},\circled{2}}}{\mapsto}$
  & $|a \rangle |b \rangle |c \rangle |d \rangle |w \rangle |ab \rangle$ \\
  $\overset{\textrm{\circled{3}}}{\mapsto}$~~
  & $|a \rangle |b \rangle |c \rangle |d \rangle |w\oplus abc \rangle |ab \rangle$
  &~~$\overset{\textrm{\circled{4}}}{\mapsto}$~~
  & $|a \rangle |b \rangle |c \rangle |d \rangle |w\oplus abc \rangle |0 \rangle$ \\
  $\overset{\textrm{\circled{5}}}{\mapsto}$~~
  & \multicolumn{3}{l}{$|a \rangle |b \rangle |c \rangle |d \rangle |w\oplus abc \rangle |abcd \oplus dw \rangle$} \\
  $\overset{\textrm{\circled{6}}}{\mapsto}$~~
  & \multicolumn{3}{l}{$|a \rangle |b \rangle |c \rangle |d \rangle |w\oplus abc \oplus abcd \oplus cdw\rangle |abcd \oplus dw \rangle$} \\
  $\overset{\textrm{\circled{7}}}{\mapsto}$~~
  & \multicolumn{3}{l}{$|a \rangle |b \rangle |c \rangle |d \rangle |w\oplus abc \oplus abcd \oplus cdw\rangle |abcd \oplus dw \oplus ab \rangle$} \\
  $\overset{\textrm{\circled{8}}}{\mapsto}$~~
  & \multicolumn{3}{l}{$|a \rangle |b \rangle |c \rangle |d \rangle |w \rangle |abcd \oplus dw \oplus ab \rangle$} \\
  $\overset{\textrm{\circled{9}}}{\mapsto}$~~
  &$|a \rangle |b \rangle |c \rangle |d \rangle |w \rangle |abcd \oplus dw \rangle$ &
  $~~\overset{\textrm{\circled{\tiny{1\!0}}}}{\mapsto}$~~
  &$|a \rangle |b \rangle |c \rangle |d \rangle |w \rangle |abcd \rangle$ \enspace.
\end{tabular}
\end{equation}

Both designs work as desired.
In fact for general $k$, time-efficient design as in Fig.\;\ref{fig:cknot_split}(a) requires $k-2$ zeroed work qubits within depth $2k-3$, whereas space-efficient design as in Fig.\;\ref{fig:cknot_split}(b) uses only one arbitrary qubit within depth $8k-24$ (for $k \ge 5$)~\cite{barenco95}.
We denote time- and space-efficient designs lower-depth and less-qubit C$^k$NOT, respectively.

Bit multiplication is one of examples qubit and depth are mutually exchangeable.
In Sect(s).~\ref{sec:AES} and~\ref{sec:SHA} we will compare multiple circuits that do the same job with a different number of qubits, and examine the consequence of each design when parallelized.


\section{Complexity of AES-128 Key Search}\label{sec:AES}
This section presumes that readers are familiar with standard AES-128 encryption algorithm~\cite{nist-aes}.
We assume that a quantum adversary is given a plaintext-ciphertext pair and asked to find the key used for the encryption.
Since AES-128 works as a PRF, it is possible that multiple keys lead to the same ciphertext,
\eqnsp
\begin{align*}
   \textrm{AES}(k_0,p) = \textrm{AES}(k_1,p) = \cdots \enspace ,
\end{align*}
where $k_i \in \{0,1\}^{128}$ are different keys and $p$ is a given plaintext.
The term pre-image will be used to denote each key $k_i$ that generates given ciphertext upon the encryption of given plaintext.

\begin{figure}[htbp]
    \centering
    \includegraphics[width=0.8\textwidth]{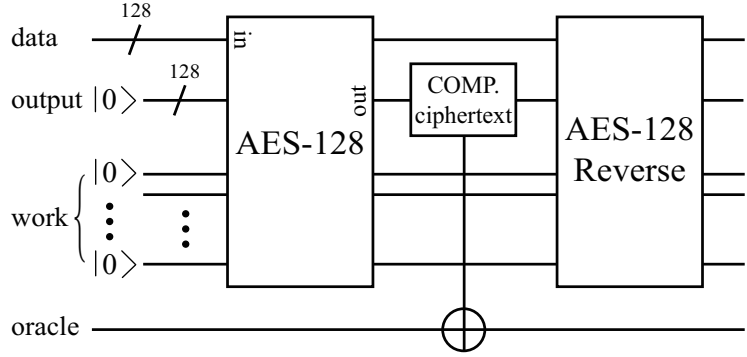}
    \caption{Oracle circuit for key search attack on AES-128}
    \label{fig:oracle-AES}
\end{figure}

The idea of applying Grover's algorithm to exhaustive attack on AES-128 is as follows.
Linearly superposed $2^{128}$ input keys encoded in 128 data qubits are fed as an input to an AES box shown in Fig.\;\ref{fig:oracle-AES}.
AES box contains a reversible circuit implementation of AES-128 encryption algorithm.
The AES box encrypts the given plaintext, outputting superposed ciphertexts encoded in output qubits.
Superposed ciphertexts are then compared with given ciphertext via C$^{128}$NOT gate to mark the target.
After marking is done, every qubit except the oracle qubit is passed on to a reverse AES box to disentangle the data qubits from other qubits.

\subsection{Circuit Implementation Cost}
AES-128 encryption internally performs SubBytes, MixColumns, ShiftRows, AddRoundKey, SubWord, RotWord and Rcon.
Quantum circuits for these operations are mostly adopted from \cite{aes16} with improvements and fixes.

MixColumns, ShiftRows and RotWord are linear operations acting on 32 bits that do not require any work qubit nor Toffoli gate.
Among them, last two are simple bit permutations which require no quantum gates (by re-wiring) or at most SWAP gates.\commentout{APPDX}
MixColumns needs to be treated more carefully as it is not a bit permutation.
Treating each four-byte column of the internal state as a length-four vector, MixColumns is expressed as a matrix multiplication,
\eqnsp
\begin{align}\label{eq:MixColumns}
  \left( \begin{array}{c}
      s_{0,j}' \\
      s_{1,j}' \\
      s_{2,j}' \\
      s_{3,j}' \\
    \end{array} \right)
  =
  \left( \begin{array}{cccc}
      02 & 03 & 01 & 01 \\
      01 & 02 & 03 & 01 \\
      01 & 01 & 02 & 03 \\
      03 & 01 & 01 & 02 \\
    \end{array} \right)
  \left( \begin{array}{c}
      s_{0,j} \\
      s_{1,j} \\
      s_{2,j} \\
      s_{3,j} \\
    \end{array} \right),
  \quad \textrm{for } 0 \le j \le 3 \enspace,
\end{align}
where 01, 02, 03 are sub-matrices when each byte $s_{i,j}$ is treated as a length-eight vector, written as
\eqnsp
\begin{align*}
  01 = \left( \begin{array}{cccccccc}
           1 & 0 & 0 & 0 & 0 & 0 & 0 & 0 \\
           0 & 1 & 0 & 0 & 0 & 0 & 0 & 0 \\
           0 & 0 & 1 & 0 & 0 & 0 & 0 & 0 \\
           0 & 0 & 0 & 1 & 0 & 0 & 0 & 0 \\
           0 & 0 & 0 & 0 & 1 & 0 & 0 & 0 \\
           0 & 0 & 0 & 0 & 0 & 1 & 0 & 0 \\
           0 & 0 & 0 & 0 & 0 & 0 & 1 & 0 \\
           0 & 0 & 0 & 0 & 0 & 0 & 0 & 1 \\
         \end{array} \right), \;
  02 = \left( \begin{array}{cccccccc}
           0 & 1 & 0 & 0 & 0 & 0 & 0 & 0 \\
           0 & 0 & 1 & 0 & 0 & 0 & 0 & 0 \\
           0 & 0 & 0 & 1 & 0 & 0 & 0 & 0 \\
           1 & 0 & 0 & 0 & 1 & 0 & 0 & 0 \\
           1 & 0 & 0 & 0 & 0 & 1 & 0 & 0 \\
           0 & 0 & 0 & 0 & 0 & 0 & 1 & 0 \\
           1 & 0 & 0 & 0 & 0 & 0 & 0 & 1 \\
           1 & 0 & 0 & 0 & 0 & 0 & 0 & 0 \\
         \end{array} \right), \;
  03 = \left( \begin{array}{cccccccc}
           1 & 1 & 0 & 0 & 0 & 0 & 0 & 0 \\
           0 & 1 & 1 & 0 & 0 & 0 & 0 & 0 \\
           0 & 0 & 1 & 1 & 0 & 0 & 0 & 0 \\
           1 & 0 & 0 & 1 & 1 & 0 & 0 & 0 \\
           1 & 0 & 0 & 0 & 1 & 1 & 0 & 0 \\
           0 & 0 & 0 & 0 & 0 & 1 & 1 & 0 \\
           1 & 0 & 0 & 0 & 0 & 0 & 1 & 1 \\
           1 & 0 & 0 & 0 & 0 & 0 & 0 & 1 \\
         \end{array} \right) \enspace.
\end{align*}
Since an explicit form of transformation matrix is given in (\ref{eq:MixColumns}), the quantum circuit implementation of the matrix can be found by methods given in~\cite{roetteler01,markov08}.

AddRoundKey and Rcon are XOR-ings of fixed-size strings which can also be efficiently realized by CNOT or X gates only.

\begin{figure}[htbp]
    \centering
    \includegraphics[width=0.7\textwidth]{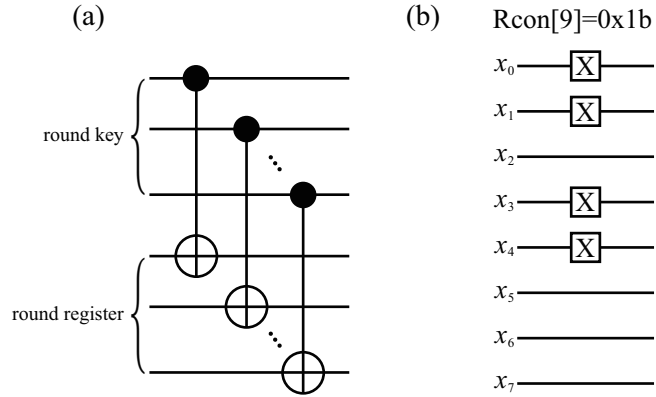}
    \caption{Quantum circuit implementation of (a) AddRoundKey and (b) Rcon. In (b), Rcon[9] circuit is given as an example.}
    \label{fig:ARK_Rcon}
\end{figure}

SubBytes and SubWord are the only operations which require quantum resources.
Since SubBytes and SubWord consist of 16 and 4 S-boxes, the S-box is the only operation to be carefully discussed.

Classically, S-box can be implemented as a look-up table.
However, a quantum counterpart of such table should involve the notion of the quantum memory aforementioned in Sect. \ref{subsec:variants}.
Therefore in this work, S-box is realized by explicitly calculating multiplicative inverse followed by GF-linear mapping and addition of the S-box constant as described in Sect. 3.2.1 of~\cite{aes16}.

S-box is realized by calculating multiplicative inverse followed by GF-linear mapping and addition of S-box constant.\commentout{APPDX}
By treating a byte as an element in GF$(2^8)= \textrm{GF}(2)[x]/(x^8+x^4+x^3+x+1)$, GF-linear mapping and addition of S-box constant are summarized as the equation
\eqnsp
\begin{align}\label{eq:sbox-affine}
   \left(
     \begin{array}{c}
       x_0' \\
       x_1' \\
       x_2' \\
       x_3' \\
       x_4' \\
       x_5' \\
       x_6' \\
       x_7' \\
     \end{array}
   \right)
   =
   \left(
     \begin{array}{cccccccc}
       1 & 0 & 0 & 0 & 1 & 1 & 1 & 1 \\
       1 & 1 & 0 & 0 & 0 & 1 & 1 & 1 \\
       1 & 1 & 1 & 0 & 0 & 0 & 1 & 1 \\
       1 & 1 & 1 & 1 & 0 & 0 & 0 & 1 \\
       1 & 1 & 1 & 1 & 1 & 0 & 0 & 0 \\
       0 & 1 & 1 & 1 & 1 & 1 & 0 & 0 \\
       0 & 0 & 1 & 1 & 1 & 1 & 1 & 0 \\
       0 & 0 & 0 & 1 & 1 & 1 & 1 & 1 \\
     \end{array}
   \right)
   \left(
     \begin{array}{c}
       x_0 \\
       x_1 \\
       x_2 \\
       x_3 \\
       x_4 \\
       x_5 \\
       x_6 \\
       x_7 \\
     \end{array}
   \right)
+
   \left(
     \begin{array}{c}
       1 \\
       1 \\
       0 \\
       0 \\
       0 \\
       1 \\
       1 \\
       0 \\
     \end{array}
   \right) \enspace,
\end{align}
where addition is XOR operation and $x_i$ are coefficients of polynomial of order $x^7$.
No work qubit nor Toffoli gate is required in this step.
While XOR operation is simply done by applying X gates to relevant qubits, implementing a transformation matrix in (\ref{eq:sbox-affine}) is not trivial.
See~\cite{roetteler01,markov08} for general methods of realizing linear transformations.

Resource estimate of quantum AES-128 encryption has been narrowed down to estimate the cost of finding multiplicative inverse of the element $\alpha$ in GF($2^8$).
In~\cite{aes16}, multiplicative inverse of $\alpha$ is calculated by using two arithmetic circuits; Maslov et al.'s modular multiplier~\cite{maslov09} and in-place squaring~\cite{aes16}.
Slight modification of previous method is found in this work with seven multipliers being used, verified by the quantum circuit simulation by matrix product state~\cite{mps}, with the seven multipliers being used as following sequences,
\eqnsp
\begin{align}\label{eq:inverse}
\begin{array}{>{\centering}p{0.95cm} l >{\centering}p{0.95cm} l >{\centering}p{0.95cm} l >{\centering}p{0.95cm} l >{\centering}p{0.95cm} l >{\centering}p{0.95cm} l}   
  &|\alpha \rangle      &                & 
   |\alpha \rangle      &                & 
   |\alpha \rangle      &                & 
   |\alpha \rangle      &                & 
   |\alpha \rangle      &                & 
   |\alpha \rangle                       \\
  &|0 \rangle           & \mapstr{CNOTs} & 
   |0 \rangle           & \mapstr{Sq}    & 
   |0 \rangle           & \mapstr{Mul}   & 
   |0 \rangle           & \mapdstr{Sq$^{-1}$}{CNOTs}{0.4} & 
   |0 \rangle           & \mapstr{Sq$\times2$}            & 
   |0 \rangle                            \\
  &|0 \rangle           &                & 
   |\alpha \rangle      &                & 
   |\alpha^2 \rangle    &                & 
   |\alpha^2 \rangle    &                & 
   |0        \rangle    &                & 
   |0        \rangle                     \\
  &|0 \rangle           &                & 
   |0 \rangle           &                & 
   |0 \rangle           &                & 
   |\alpha^3 \rangle    &                & 
   |\alpha^3 \rangle    &                & 
   |\alpha^{12} \rangle                  \\
  &|0 \rangle           &                & 
   |0 \rangle           &                & 
   |0 \rangle           &                & 
   |0 \rangle           &                & 
   |0 \rangle           &                & 
   |0 \rangle                            \\
  &&&&&&&&&&& \\
  &|\alpha \rangle      &                & 
   |\alpha \rangle      &                & 
   |\alpha \rangle      &                & 
   |\alpha^{64} \rangle &                & 
   |\alpha^{64} \rangle &                & 
   |\alpha \rangle                       \\
  \mapdstr{CNOTs}{Sq$\times2$}{0.4}
  &|0 \rangle           & \mapdstr{Mul}{Sq$^{-1}\!\!\times\!\!2$}{0.4}  &
   |0 \rangle           & \mapdstr{CNOTs}{Sq$^{-1}\!\!\times\!\!2$}{0.4}&
   |0 \rangle           & \mapdstr{Sq$\times6$}{Mul}{0.4}               &
   |0 \rangle           & \mapdstr{Mul}{Sq}{0.4}                        &
   |\alpha^{254} \rangle& \mapdstr{Sq$^{-1}\!\!\times\!\!6$}{Mul$^{-1}$}{0.4} &
   |\alpha^{254} \rangle                 \\
  &|0 \rangle           &                & 
   |\alpha^{60} \rangle &                & 
   |\alpha^{60} \rangle &                & 
   |\alpha^{60} \rangle &                & 
   |\alpha^{60} \rangle &                & 
   |\alpha^{60} \rangle                  \\
  &|\alpha^{12} \rangle &                & 
   |\alpha^{12} \rangle &                & 
   |\alpha^3 \rangle    &                & 
   |\alpha^3 \rangle    &                & 
   |\alpha^3 \rangle    &                & 
   |\alpha^3 \rangle                     \\
  &|\alpha^{48} \rangle &                & 
   |\alpha^{12} \rangle &                & 
   |0 \rangle           &                & 
   |\alpha^{63} \rangle &                & 
   |\alpha^{63} \rangle &                & 
   |0 \rangle                            \\
  &&&&&&&&&&& \\
  &|\alpha \rangle      &                & 
   |\alpha \rangle      &                & 
   |\alpha \rangle      &                & 
   |\alpha \rangle      &                & 
   |\alpha \rangle      &                & 
   |\alpha \rangle                       \\
  \mapdstr{Sq$\times2$}{CNOTs}{0.4}
  &|\alpha^{254\!}\rangle\!& \mapdstr{Sq$\times2$}{Mul$^{-1}$}{0.4}         &
   |\alpha^{254\!}\rangle\!& \mapdstr{Sq$^{-1}\!\!\times\!\!2$}{CNOTs}{0.4} &
   |\alpha^{254\!}\rangle\!& \mapdstr{CNOTs}{Sq$^{-1}\!\!\times\!\!2$}{0.4} &
   |\alpha^{254\!}\rangle\!& \mapdstr{Sq}{Mul$ ^{-1}$}{0.4}                 &
   |\alpha^{254\!}\rangle\!& \mapdstr{Sq$^{-1}$}{CNOTs}{0.4}                &
   |\alpha^{254\!}\rangle                 \\
  &|\alpha^{60} \rangle &                & 
   |0 \rangle           &                & 
   |0 \rangle           &                & 
   |\alpha \rangle      &                & 
   |\alpha^2 \rangle    &                & 
   |0 \rangle                            \\
  &|\alpha^{12} \rangle &                & 
   |\alpha^{12} \rangle &                & 
   |\alpha^{12} \rangle &                & 
   |\alpha^3 \rangle    &                & 
   |0 \rangle           &                & 
   |0 \rangle                            \\
  &|\alpha^{12} \rangle &                & 
   |\alpha^{48} \rangle &                & 
   |0 \rangle           &                & 
   |0 \rangle           &                & 
   |0 \rangle           &                & 
   |0 \rangle \enspace,                            
\end{array}
\end{align}
where each state ket represents eight-bit register, Sq and Mul denote modular squaring and multiplication operations, and CNOTs implies eight CNOT gates copying the string.
Seven multipliers including reverse operations have been used as can be seen from (\ref{eq:inverse}).

As squaring in GF($2^8$) is linear, it does not involve the use of Toffoli nor work qubits.
Therefore it is only required to estimate the cost of multipliers.
Table~\ref{tab:aes-elem-cost} summarizes the elementary operation costs in AES-128.
Two distinct multipliers are considered in this work; Maslov et al.'s design~\cite{maslov09} and Kepley and Steinwandt's design~\cite{kepley15}.

\renewcommand{\arraystretch}{1.2}
\begin{table}[htbp]
  \caption{Costs of elementary operations in AES-128. A quarter of work qubits needed in S-box turn into garbage qubits.}
  \label{tab:aes-elem-cost}
  \centering
  \begin{tabular}{>{\centering}p{2cm}   ||
                  >{\centering}p{1.7cm} | >{\centering}p{1.7cm} |
                  >{\centering}p{1.7cm} | >{\centering}p{1.7cm} }             \hline
                      & \multicolumn{2}{c|}{Less-qubit} & \multicolumn{2}{c}{Lower-depth} \tabularnewline \cline{2-5}
                      &   Multiplier   &    S-box       &   Multiplier  &    S-box        \tabularnewline \hline
     Toffoli-depth    &$      18      $&$      126     $&$      8      $&$      56      $ \tabularnewline
     Work qubits      &$      8       $&$      32      $&$      27     $&$      108      $ \tabularnewline
    \hline
  \end{tabular}
\end{table}
\renewcommand{\arraystretch}{1.0}
First four multiplications in S-box are aimed at computing the multiplicative inverse.
Remaining three (reverse) multiplications are then used to clean garbage qubits produced by previous multiplications.
At the end of S-box, a quarter of total work qubits needed in S-box turn into garbage qubits.

\subsection{Design Candidates}
Four main trade-offs points are considered.

First point, that has an impact on the overall design, is to determine whether key schedule and AES rounds are carried out in parallel.
As S-box is used both in key schedule and AES round, schedule-round parallel implementation would require more work qubits.
This option is denoted by \emph{serial/parallel schedule-round}.

Second, AES round functions can be reversed in the middle of encryption process to save work qubits.
The idea of reverse AES round was suggested in Sect. 3.2.3 in~\cite{aes16}.
Since each run of round function produces garbage qubits, forward running of 10 rounds accumulates $\ge 1280$ garbage qubits.
Putting reverse rounds in between forward rounds reduce a large amount of work qubits at the cost of longer Toffoli-depth.
This option is denoted by \emph{reverse round} when applied.

Thirdly, a choice of multiplier could make an important trade-offs point.
Less-qubit and lower-depth multipliers are two options.
For simplicity, we do not consider adaptive use of both multipliers although it is possible to improve the efficiency by using appropriate multiplier in different part of circuit.
This option is denoted by \emph{less-qubit/lower-depth multiplier}.

Fourth, to present the extremely depth-optimized circuit design, the cleaning process in S-box could be skipped leaving every work qubit used in S-box garbage.
This option is denoted by \emph{S-box un-cleaning} when applied.

In total, there exist 16 (=$2^4$) different circuit designs.
We only take six of them into account as others seem to be flawed compared with the six. Six designs are denoted as follows.

\begin{itemize}
  \item \acct{1}: Serial schedule-round, reverse-round, less-qubit multiplier
  \item \acct{2}: Serial schedule-round, reverse-round, lower-depth multiplier
  \item \acct{3}: Parallel schedule-round, less-qubit multiplier
  \item \acct{4}: Parallel schedule-round, lower-depth multiplier
  \item \acct{5}: Parallel schedule-round, less-qubit multiplier, S-box un-cleaning
  \item \acct{6}: Parallel schedule-round, lower-depth multiplier, S-box un-cleaning
\end{itemize}

\subsection{Comparison}\label{subsec:aes-comp}
Toffoli-depth and total number of qubits are carefully estimated for each design.
Costs of quantum AES-128 encryption circuit and entire Grover's algorithm on a single quantum processor is summarized in Table\;\ref{tab:aes-single-comp}. 
Estimates for single Grover iteration is omitted from the table as it can easily be calculated from costs of AES-128 encryption circuit;
\eqnsp
\begin{align*}
 \textrm{cost(Grover iteration)} = 2\cdot \textrm{cost(AES-128)} + 2\cdot\textrm{cost(C$^{128}$NOT)}\enspace,
\end{align*}
where cost(C) is Toffoli-depth of a circuit C.
Note that full Toffoli-depth of the entire Grover's algorithm is estimated considering $I_\textrm{rand}^\textrm{KS}$ in Proposition~\ref{prop:iter_rand_KS}.
\renewcommand{\arraystretch}{1.2}
\begin{table}[htbp]
  \caption{Costs of AES-128 encryption circuit and entire attack circuit on a single quantum processor. MAXDEPTH is not considered. }
  \label{tab:aes-single-comp}
  \centering
  \begin{tabular}{>{\centering}p{2cm} || >{\centering}p{2cm} |
                  >{\centering}p{2cm} | >{\centering}p{2.5cm} | >{\centering}p{2cm} }             \hline
             &\multicolumn{2}{c}{AES-128}\vline& \multicolumn{2}{c}{Grover}      \tabularnewline \cline{2-5}
             & Toffoli-depth &     Qubits      & Toffoli-depth      &    Qubits  \tabularnewline \hline
   \acct{1} &$    11088    $&$      984      $&$ 1.360\ldots\times2^{78}$&     985    \tabularnewline
   \acct{2} &$     4928    $&$     3017      $&$ 1.290\ldots\times2^{77}$&    3018    \tabularnewline
   \acct{3} &$     1260    $&$     2208      $&$ 1.405\ldots\times2^{75}$&    2209    \tabularnewline
   \acct{4} &$      560    $&$     7148      $&$ 1.510\ldots\times2^{74}$&    7149    \tabularnewline
   \acct{5} &$      720    $&$     6654      $&$ 1.808\ldots\times2^{74}$&    6655    \tabularnewline
   \acct{6} &$      320    $&$    21854      $&$ 1.064\ldots\times2^{74}$&   21855    \tabularnewline
    \hline
  \end{tabular}
\end{table}
\renewcommand{\arraystretch}{1.0}
\renewcommand{\arraystretch}{1.2}
\begin{table}[htbp]
  \caption{Comparison of time-space complexity of different AES-128 circuit designs without the oracle assumption. The smallest $c_\#^\ks$ is found by \acct{4} with $c_4^\ks = 1.048\ldots \times 2^{33}$. Other values are divided by $c_4^\ks$ for easier comparison.}
  \label{tab:aes-para-comp}
  \centering
  \begin{tabular}{>{\centering}p{1.5cm} || >{\centering}p{1.5cm} | >{\centering}p{1.5cm} |
                  >{\centering}p{1.5cm}  | >{\centering}p{1.5cm} | >{\centering}p{1.5cm} |
                  >{\centering}p{1.5cm}    }             \hline
            & \acct{1} & \acct{2} & \acct{6} & \acct{5} & \acct{3} & \acct{4}  \tabularnewline \hline
  $c_\#^\ks/c_4^\ks
           $&$ 28.606\ldots  $&$ 19.705\ldots  $&$ 1.519\ldots   $&$ 1.333\ldots   $&$ 1.070\ldots   $&$    1    $ \tabularnewline \hline
  \end{tabular}
\end{table}
\renewcommand{\arraystretch}{1.0}

Proposition~\ref{prop:to-KS} basically sets up the criterion for a comparison of circuit designs.
Here we replace $T_q^{\ks}$ and $S_q^{\ks}$ by $\mathcal{T}_q^\ks$ and $\mathcal{S}_q^\ks$, respectively, denoting Toffoli-depth and total number of qubits in key search problem, i.e.,
\eqnsp
\begin{align}\label{eq:TS-wo-bo-KS}
  \left( \mathcal{T}_q^\ks \right)^2 \mathcal{S}_q^\ks = c_\#^\ks N \enspace,
\end{align}
where $c_\#^\ks$ varies depending on circuit designs.
Now a parameter $c_\#^\ks$ is the only `yardstick' that tells us which design is better.
When parallelized for large $S_q$, the expected iteration number converges to the one given in (\ref{eq:iter_rand_ip_val_KS}).
Taking the converged value, the $c_\#^\ks$ for each circuit design is summarized in Table\;\ref{tab:aes-para-comp}.
Assuming the MAXDEPTH is capped at some fixed value smaller than $\sqrt{N}$, the table indicates that for example \acct{1} requires about 28.6\ldots times as many qubits as \acct{4}.

\subsection{Comparison to Ensured Single Target}
It is possible to guarantee an existence of a single target by using multiple plaintext-ciphertext pairs.
To ensure a single target, the oracle now performs $r$ AES encryptions simultaneously.
In~\cite{aes16}, $r=3$ is chosen for AES-128.
Each AES box encrypts different plaintext with the same superposed input keys.
As a result, for example for $r=3$ in AES-128, the probability that two pre-images exist is the same as for $k_1$ to exist such that
\eqnsp
\begin{align*}
   \textrm{AES}(k_0,p_1)\|\textrm{AES}(k_0,p_2)\|\textrm{AES}(k_0,p_3) \!=\! \textrm{AES}(k_1,p_1)\|\textrm{AES}(k_1,p_2)\|\textrm{AES}(k_1,p_3) \enspace ,
\end{align*}
where $\|$ is concatenation, $k_0$ is the true key and $p_i$ are distinct plaintexts.
The cost of guaranteeing a single target is more or less multiplying the total number of qubits by $r$.
%
%
\renewcommand{\arraystretch}{1.2}
\begin{table}[htbp]
  \caption{Comparison of attack design with and without a single target. Toffoli-depth of encryption circuit is the same in both, because the same AES design is implemented.
  It is noticeable that the full Toffoli-depth of \emph{Unique Key} is not far different from that of \acct{4}, although the number of qubits is nearly doubled. }
  \label{tab:aes-unique-comp}
  \centering
  \begin{tabular}{>{\centering}p{1.9cm} || >{\centering}p{2cm}  | >{\centering}p{1.5cm} |
                  >{\centering}p{2.5cm} | >{\centering}p{1.5cm} } \hline
             &\multicolumn{2}{c|}{AES-128}& \multicolumn{2}{c}{Grover}    \tabularnewline \cline{2-5}
             & Toffoli-depth &     Qubits &     Toffoli-depth    & Qubits  \tabularnewline \hline
   Unique Key&$      560    $&$    14296 $&$ 1.269\ldots \times2^{74} $&  14297  \tabularnewline
   \acct{4}  &$      560    $&$     7148 $&$ 1.510\ldots \times2^{74} $&   7149  \tabularnewline
    \hline
  \end{tabular}
\end{table}
\renewcommand{\arraystretch}{1.0}

It is now natural to ask if the oracle operator with a single target is more cost-efficient than the random function oracle with less qubits.
Assuming $r=2$ guarantees a single target, we compare a design dubbed \emph{Unique Key} with \acct{4}.
\emph{Unique Key}'s encryption circuit design is chosen to be the same as \acct{4}, meaning that the difference in efficiency solely comes from ensuring a single target.
Results are summarized in Table\;\ref{tab:aes-unique-comp}.
Full Toffoli-depth of \emph{Unique Key} is estimated considering $I_{1}$ in (\ref{eq:iter_t_val}).
With a guaranteed single target, Toffoli-depth is expected to be shortened compared with \acct{4} at the cost of doubling qubits.
Although ensuring single target can be regarded as the optimization point when using single processor, it \textit{strictly} cannot be an option in parallel attack since the inner parallelization removes a penalty of random characteristics as in~(\ref{eq:prob_rand_ip}).



\section{Complexity of SHA-256 Pre-image Search}\label{sec:SHA}
This section presumes that readers are familiar with standard SHA-256~\cite{nist-sha}.
The idea of applying Grover's algorithm to pre-image attack on SHA-256 is as follows.

\begin{figure}[htbp]
    \centering
    \includegraphics[width=0.8\textwidth]{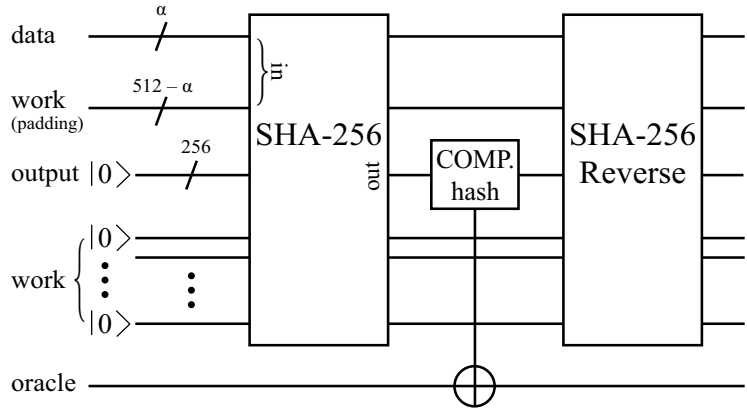}
    \caption{Oracle circuit for pre-image attack on SHA-256}
    \label{fig:oracle-SHA}
\end{figure}

A message block consisting of $\alpha$ bits of message and $512-\alpha$ bits of padding are input to the hash box as shown in Fig.\;\ref{fig:oracle-SHA}.
Hash box contains a reversible circuit implementation of SHA-256 to permit superposed input.
The input of linearly superposed $2^\alpha$ messages are then passed on to the hash box resulting in superposed corresponding hashes.
Processed hashes are then compared with the given hash via C$^{256}$NOT gate.
After the target is marked, the entire qubits except the oracle qubit are further processed through the reverse hash box as in~Fig.\;\ref{fig:oracle-SHA}.
The quantum state of the data qubits at the end of Fig.\;\ref{fig:oracle-SHA} reads
\eqnsp
\begin{align*}
  | \psi \rangle &= \frac{1}{\sqrt{2^\alpha}}
    \left(
      |00\cdots0 \rangle \!+\! |00\cdots1 \rangle \!+\! \cdots
                         \!-\! |t_i \rangle + \cdots
                         \!+\! |11\cdots1 \rangle
    \right) \otimes |\text{padding} \rangle \enspace, 
\end{align*}
where each ket state encodes a message and $t_i$'s are pre-images of the given hash value.
The number of targets probabilistically varies depending on $\alpha$ which is capped at $447(=512-64-1)$.

\subsection{Circuit Implementation Cost}
SHA-256 internally performs five elementary operations, $\sigma_{0(1)}$, $\Sigma_{0(1)}$, \emph{Ch}, \emph{Maj}, and \emph{ADDER} (modular addition)~\cite{nist-sha}.

Among internal operations carried out in SHA-256, $\Sigma_{0(1)}$ consists only of XOR-ings of bit permutations.
Results of three \emph{ROTR} operations are written on 32-bit output register, with being successively XOR-ed.
Only CNOT gates are involved in implementation with 32 work qubits.

Similarly, $\sigma_{0(1)}$ is implemented with one difference from $\Sigma_{0(1)}$, that is \emph{SHR}.
\emph{SHR} itself is not linear, but writing a result of \emph{SHR} on 32-bit output register is possible.
Therefore, $\sigma_{0(1)}$ is also efficiently realized by CNOT gates with 32 work qubits.

\emph{Ch} and \emph{Maj} are bit-wise operations that do require Toffoli gates.
We adopt Amy et al.'s design where \emph{Ch} and \emph{Maj} require one and two Toffoli gates, respectively.
See Figs. 4 and 5 in~\cite{sha16}.

Serial schedule-round implementation of SHA-256 is illustrated in Fig.~\ref{fig:circuit-SHA}.

\begin{figure}[htbp]
    \centering
    \includegraphics[width=0.99\textwidth]{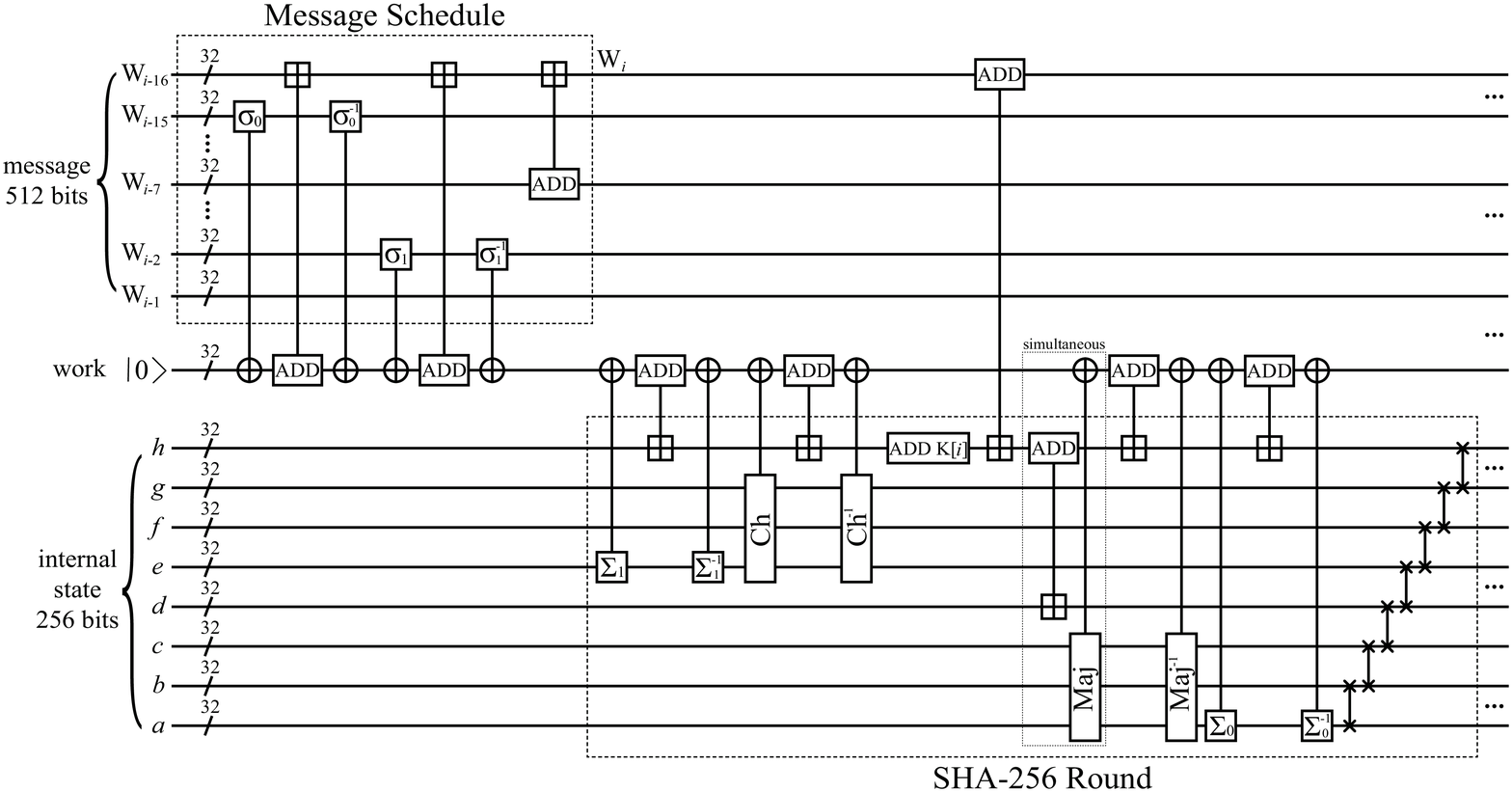}
    \caption{Reversible circuit for serial implementation of SHA-256 message schedule and round function.
    The message block consisting of 16 words are recursively updated in-place.
    Note that it is straightforward to make message schedule and round functions work in parallel by expanding the work space. Seven two-qubit gates at the end of round are SWAP gates. The symbol $\boxplus$ is addition modulo $2^{32}$.}
    \label{fig:circuit-SHA}
\end{figure}

Low-level circuit design for each function in this work is mostly adopted from~\cite{sha16} except \emph{ADDER} choice and totally re-designed message schedule.
A few options are available for \emph{ADDER} circuits one can adopt (see for example,~\cite{adderlists}).
For our purpose of comparing various circuit designs, we choose two versions of adders; a poly-depth \emph{ADDER}~\cite{adder-poly} and a log-depth \emph{ADDER}~\cite{adder-log}.
Table~\ref{tab:sha-elem-cost} summarizes resource costs of elementary operations in SHA-256.
\renewcommand{\arraystretch}{1.2}
\begin{table}[htbp]
   \caption{Costs of elementary operations in SHA-256. Work qubits in \emph{ADDER} columns get cleaned within the respective Toffoli-depth.
   Outputs of $\sigma_{0(1)}$, $\Sigma_{0(1)}$, $Ch$, and $Maj$ are written on work qubits.}
   \label{tab:sha-elem-cost}
   \centering
   \begin{tabular}{>{\centering}p{2cm} || >{\centering}p{2.2cm} |
                   >{\centering}p{2cm}    | >{\centering}p{2cm} |
                   >{\centering}p{1cm}    | >{\centering}p{1cm} }    \hline
                    & \emph{ADDER} (poly) & \emph{ADDER} (log) &$ \sigma_{0(1)}$, $\Sigma_{0(1)} $& \emph{Ch} & \emph{Maj} \tabularnewline \hline
      Toffoli-depth &$        61         $&$         22       $&$                0               $&$    1    $&$    2     $\tabularnewline
      Work qubits   &$         1         $&$         53       $&$               32               $&$    32   $&$    32    $\tabularnewline
      \hline
   \end{tabular}
\end{table}
\renewcommand{\arraystretch}{1.0}

\subsection{Design Candidates}
Three optimization points are considered.

First point, that has an impact on the overall design, is to determine whether message schedule and round functions are carried out in parallel.
Figure~\ref{fig:circuit-SHA} 
 shows a serial circuit implementation of SHA-256.
In the algorithm description, $i$-th round function is fed by $i$-th word from the schedule meaning that parallel implementation is possible if enough work qubits are given.
This option is denoted by \emph{serial/parallel schedule-round}.

Second point is to determine which \textit{ADDER} is to be used.
Use of the poly-depth \textit{ADDER} is better in saving work space whereas the log-depth \textit{ADDER} could shorten the execution time.
For simplicity, we do not consider adaptive use of both \textit{ADDER}s although it is possible to improve the efficiency by using appropriate \textit{ADDER} in different part of circuit.
This option is denoted by \emph{poly-depth/log-depth ADDER}.

Lastly, it is now optional to decide how many work qubits are to be used to implement C$^{256}$NOT gate for marking the targets (hash comparison).
As discussed in Sect.~\ref{subsec:trade-offs}, C$^k$NOT gate can be one of the trade-offs points.
However in AES-128, we do not need to consider C$^{128}$NOT as an optimization point seriously since the encryption process accompanies enough number of work qubits that can be reused in lower-depth C$^{128}$NOT gate.
Situation is different in SHA-256.
It is noticeable that hashing process of SHA-256 does not involve as many work qubits as AES-128, meaning that the lower-depth C$^{256}$NOT gate cannot be implemented unless more qubits are introduced solely for hash comparison.
Toffoli-depth and work qubits required for lower-depth (less-qubit) C$^{256}$NOT gate are 509 (2024) and 254 (1), respectively.
Note that lower-depth and less-qubit C$^{k}$NOT gates present here are only two extreme exemplary designs.
This option is denoted by \emph{less-qubit/lower-depth} C$^{256}$NOT.

In total, there exist 8 (=$2^3$) distinct circuit designs.
We only analyze six of them since others do not seem to have merits.
Six designs are denoted as follows.

\begin{itemize}
  \item \scct{1}: Serial schedule-round, poly-depth \emph{ADDER}, less-qubit C$^{256}$NOT
  \item \scct{2}: Serial schedule-round, log-depth \emph{ADDER}, less-qubit C$^{256}$NOT
  \item \scct{3}: Serial schedule-round, log-depth \emph{ADDER}, lower-depth C$^{256}$NOT
  \item \scct{4}: Parallel schedule-round, poly-depth \emph{ADDER}, less-qubit C$^{256}$NOT
  \item \scct{5}: Parallel schedule-round, log-depth \emph{ADDER}, less-qubit C$^{256}$NOT
  \item \scct{6}: Parallel schedule-round, log-depth \emph{ADDER}, lower-depth C$^{256}$NOT
\end{itemize}

\subsection{Comparison}
Toffoli-depth and total number of qubits are carefully estimated for each design.
The number of data qubits $\alpha$ has to be determined at this point.
In our numerical calculation, $\alpha=266$ seems to safely achieve the optimal expected iteration number given by Proposition \ref{prop:iter_rand_PS_ld} and to remove the failure probability.
Costs of quantum SHA-256 hashing circuit and the entire Grover's algorithm on a single quantum processor is summarized in Table\;\ref{tab:sha-single-comp}. 
Estimates for single Grover iteration is omitted from the table as it can easily be calculated from costs of SHA-256 circuit;
\eqnsp
$$
\textrm{cost(Grover iteration)} = 2\cdot \textrm{cost(SHA-256)} + \textrm{cost(C$^{256}$NOT)} + \textrm{cost(C$^{266}$NOT)}\enspace.
$$

\renewcommand{\arraystretch}{1.2}
\begin{table}[htbp]
  \caption{Costs of SHA-256 hashing circuit and entire attack circuit on a single quantum processor. MAXDEPTH is not considered.}
  \label{tab:sha-single-comp}
  \centering
  \begin{tabular}{>{\centering}p{2cm} || >{\centering}p{2cm} |
                  >{\centering}p{2cm} | >{\centering}p{2.5cm} | >{\centering}p{2cm} }             \hline
             &\multicolumn{2}{c|}{SHA-256} & \multicolumn{2}{c}{Grover}      \tabularnewline \cline{2-5}
             & Toffoli-depth &     Qubits  &      Toffoli-depth  &    Qubits  \tabularnewline \hline
   \scct{1} &$    36368    $&$      801  $&$ 1.586\ldots\times2^{143}$&     802    \tabularnewline
   \scct{2} &$    13280    $&$      853  $&$ 1.227\ldots\times2^{142}$&     854    \tabularnewline
   \scct{3} &$    13280    $&$      853  $&$ 1.163\ldots\times2^{142}$&    1023    \tabularnewline
   \scct{4} &$    27584    $&$      834  $&$ 1.216\ldots\times2^{143}$&     835    \tabularnewline
   \scct{5} &$    10112    $&$      938  $&$ 1.919\ldots\times2^{141}$&     939    \tabularnewline
   \scct{6} &$    10112    $&$      938  $&$ 1.792\ldots\times2^{141}$&    1023    \tabularnewline
    \hline
  \end{tabular}
\end{table}
\renewcommand{\arraystretch}{1.0}

\renewcommand{\arraystretch}{1.2}
\begin{table}[htbp]
  \caption{Comparison of trade-offs coefficients of different SHA-256 circuit designs without the oracle assumption. The smallest $c_\#^\pis$ is found by \scct{6} with $c_6^\pis = 1.034\ldots \times 2^{38}$. Other values are divided by $c_6^\pis$ for easier comparison.}
  \label{tab:sha-para-comp}
  \centering
  \begin{tabular}{>{\centering}p{1.5cm} || >{\centering}p{1.5cm} | >{\centering}p{1.5cm} |
                  >{\centering}p{1.5cm}  | >{\centering}p{1.5cm} | >{\centering}p{1.5cm} |
                  >{\centering}p{1.5cm}    }             \hline
            & \scct{1} & \scct{4} & \scct{3} & \scct{2} & \scct{5} & \scct{6}  \tabularnewline \hline
  $c_\#^\pis/c_6^\pis$
            &$  9.830\ldots  $&$  6.015\ldots $&$  1.685\ldots   $&$  1.565\ldots  $&$  1.053\ldots  $&$    1    $ \tabularnewline \hline
  \end{tabular}
\end{table}
\renewcommand{\arraystretch}{1.0}

Proposition~\ref{prop:to-PS} establishes the criterion for the comparison.
Similar to~(\ref{eq:TS-wo-bo-KS}), we replace $T_q^{\pis}$ and $S_q^{\pis}$ by $\mathcal{T}_q^{\pis}$ and $\mathcal{S}_q^{\pis}$, respectively, denoting Toffoli-depth and total number of qubits, i.e.,
\eqnsp
\begin{align}\label{eq:TS-wo-bo-PS}
  \left( \mathcal{T}_q^{\pis} \right)^2 \mathcal{S}_q^{\pis} = c_\#^{\pis} N \enspace,
\end{align}
where $c_\#^\pis$ varies depending on efficiency of circuits.
When parallelized for large $S_q$, the expected iteration number converges to the one given in (\ref{eq:iter_rand_op_val_PS}).
Taking the converged value, $c_\#^\pis$ for each design is summarized in Table\;\ref{tab:sha-para-comp}.
If MAXDEPTH is capped at some fixed value smaller than $\sqrt{N}$, the table indicates that for example \scct{1} requires about 9.8\ldots times as many qubits as \scct{6}.

\section{Complexity of SHA-256 Collision Finding}\label{sec:SHA-coll}
Costs of two collision finding algorithms, GwDP and CNS, are to be estimated in this section.
We adopt \scct{6} which also turn out to be the most efficient in time-space complexity in GwDP and CNS algorithms\footnote{Details on circuit comparisons in GwDP and CNS algorithms are dropped from the main text. An interesting point worth noticing is that \scct{5} has small advantageous range of $S_q (< 2^{8})$ over \scct{6}. The reason is that while \scct{5} requires zero additional qubit in hash comparison, \scct{6} needs $(256-d-2)$ qubits in comparison where $d$ is the number of fixed bits in DP. Since $d$ grows as $S_q$ increases, there occurs crossover point. It is also noticeable that \scct{6} cannot exactly fit into Proposition \ref{prop:to-CF_GwDP} for the same reason just mentioned, but deviation is small.}.

\subsection{GwDP Algorithm}
Estimating the cost of GwDP algorithm is straightforward.
Basically this algorithm constructs a set of DPs by running multiple instances of Grover's algorithm so that there occurs collision in the set.
By using (\ref{eq:iter_rand_GwDP}), costs of GwDP algorithm for selected number of machines are summarized in Table\;\ref{tab:GwDP-para}.

\renewcommand{\arraystretch}{1.2}
\begin{table}[htbp]
  \caption{Costs of GwDP algorithm for various number of machines. Note that the algorithm also requires classical memory of size $O(S_q)$.
  }
  \label{tab:GwDP-para}
  \centering
  \begin{tabular}{>{\centering}p{1cm} || >{\centering}p{2.5cm} | >{\centering}p{2.5cm}}  \hline
    $S_q$   &   Toffoli-depth    &   Qubits  \tabularnewline \hline
    $2^2$   &$1.986\ldots\times2^{141}$&$    4084         $ \tabularnewline \hline
    $2^4$   &$1.985\ldots\times2^{139}$&$   16272         $ \tabularnewline \hline
    $2^8$   &$1.984\ldots\times2^{135}$&$   258304        $ \tabularnewline \hline
    $2^{16}$&$1.981\ldots\times2^{127}$&$ 6.508\ldots\times10^7    $ \tabularnewline \hline
    $2^{32}$&$1.975\ldots\times2^{111}$&$ 4.127\ldots\times10^{12} $ \tabularnewline \hline
    $2^{64}$&$1.963\ldots\times2^{79} $&$ 1.732\ldots\times10^{22} $ \tabularnewline \hline
  \end{tabular}
\end{table}
\renewcommand{\arraystretch}{1.0}

If $T_q^{\cfg}$ and $S_q^{\cfg}$ in Proposition \ref{prop:to-CF_GwDP} are replaced by Toffoli-depth $\mathcal{T}_q^{\cfg}$ and number of qubits $\mathcal{S}_q^{\cfg}$, the trade-offs curve reads
\eqnsp
\begin{align}\label{eq:TS-wo-bo-CF}
  \mathcal{T}_q^{\cfg} \mathcal{S}_q^{\cfg} = c^{\cfg} \cdot N^{\frac{1}{2}} \enspace ,
\end{align}
where $c^{\cfg}$ is found to be $1.802\ldots\times2^{25}$ by using $S_q=2^{64}$ case.

\subsection{CNS Algorithm}\label{subsec:CNS}
Proposition~\ref{prop:iter_CF_CNS} suggests the optimal expected number of iterations in terms of $t_L$.
The only extra work need to be done here is to determine $t_L$ explicitly.
From the definition of $t_L$, it reads
\eqnsp
\begin{gather}
  t_L = \frac{ {\rm cost}(S_{f_L}) }{ 2^l\cdot {\rm cost}(G) } \enspace, \nn
\begin{align}
  &\textrm{cost}(S_{f_L}) = 2 \cdot \textrm{cost(SHA-256)} + 2^l \cdot \textrm{cost}\left( \textrm{C$^{(256-d)}$NOT} \right) \enspace, \nn
  &\textrm{cost}(G) = 2 \cdot \textrm{cost(SHA-256)} + \textrm{cost}\left( \textrm{C$^d$NOT} \right) + \textrm{cost}\left( \textrm{C$^{256}$NOT} \right) \enspace, \nonumber
%
%
\end{align}\nn
  l = \frac{d}{2} + \log_2 \left( \frac{\pi}{2 t_L} \right) \enspace, \qquad
  d= \left\lfloor \frac{512}{5} + \frac{2}{5} \log_2\left( \frac{(2 t_L)^3 }{\pi } \right) \right\rceil \enspace,      \label{eq:cns_t_L}
\end{gather}
where $G$ is Grover iteration.
Numerical approach was taken to find $t_L$, $d$ and $l$, which came out to be $0.015182\ldots$, $96$ and $54.538\ldots$, respectively.
By substituting these values for parameters in (\ref{eq:iter_rand_CNS}), the expected number of iterations becomes $I_{\rnd}^{\cfc} = 1.856\ldots \times 2^{102}$.
Note that this value is somewhat different from that of Proposition \ref{prop:iter_CF_CNS} as $d$ has been rounded off. 
Finally by multiplying $I_{\rnd}^{\cfc}$ and the time cost of $G$, we obtain the total Toffoli-depth of CNS algorithm as
\eqnsp
\begin{align}
  I_{\rnd}^{\cfc} \cdot {\rm cost}(G) = 1.184\ldots \times 2^{117} \enspace .
\end{align}
%
Quantum space cost is cheaper than \scct{6} because C$^{(256-d)}$NOT gate used for list-comparison requires less work qubits than C$^{256}$NOT in pre-image search.
It is estimated to be 939 qubits in total.
\renewcommand{\arraystretch}{1.2}
\begin{table}[htbp]
  \caption{Parameter values and costs of CNS algorithm for various number of machines. Note that the algorithm also requires $O(N^{1/5} S_q^{1/5})$ classical resources. }
  \label{tab:cns-para}
  \centering
  \begin{tabular}{>{\centering}p{1cm} || >{\centering}p{1.5cm} | >{\centering}p{1cm} |
                  >{\centering}p{1.8cm}  | >{\centering}p{2.5cm} | >{\centering}p{2.5cm} }  \hline
    $S_q$ &   $l$    & $d$   &   $t_L$    &    Toffoli-depth   &  Qubits  \tabularnewline \hline
    $2^2$ & $55.155\ldots$ & $97$  & $0.015064\ldots$ &$1.353\ldots\times2^{116}$&$       3756         $ \tabularnewline \hline
    $2^4$ & $55.558\ldots$ & $98$  & $0.014987\ldots$ &$1.203\ldots\times2^{115}$&$      15024         $ \tabularnewline \hline
    $2^8$ & $56.364\ldots$ & $99$  & $0.014834\ldots$ &$1.729\ldots\times2^{112}$&$      240384        $ \tabularnewline \hline
  $2^{16}$& $57.976\ldots$ & $102$ & $0.014527\ldots$ &$1.960\ldots\times2^{107}$&$    6.154\ldots\times10^{7}  $ \tabularnewline \hline
  $2^{32}$& $61.201\ldots$ & $109$ & $0.013914\ldots$ &$1.352\ldots\times2^{98} $&$    4.033\ldots\times10^{12} $ \tabularnewline \hline
  $2^{64}$& $67.654\ldots$ & $121$ & $0.012692\ldots$ &$1.100\ldots\times2^{79} $&$    1.732\ldots\times10^{22} $ \tabularnewline \hline
  \end{tabular}
\end{table}
\renewcommand{\arraystretch}{1.0}

When parallelized, $t_L$ slightly changes since $l$ and $d$ depend on $S_q(=2^s)$, the number of machines.
Modified $l$ and $d$ reads
\eqnsp
\begin{gather*}
%
  l = \frac{d}{2} + \log_2 \left( \frac{\pi}{2 t_L} \right) \enspace , \quad
  d= \left\lfloor \frac{512 + 2s}{5} + \frac{2}{5} \log_2\left( \frac{1.291\ldots (2 t_L)^3 }{\pi } \right) \right\rceil \enspace,
\end{gather*}
where $t_L$, cost($S_{f_L}$) and cost($G$) are the same as in (\ref{eq:cns_t_L}).
We have estimated the quantum resource costs of CNS algorithm for a few $S_q$ values as summarized in Table\;\ref{tab:cns-para}.
Note that estimated time complexities are different from ones given by (\ref{eq:iter_rand_CNS_para}) as the equation is obtained for large $S_q$, and $d$ here has been rounded off to the nearest integer.
Due to the bound $s<\min(l,n-d-l)$, $S_q=2^{66}$ is almost the maximum number of quantum machines Proposition \ref{prop:to-CF_CNS} holds.


%


\section{Security Strengths of AES and SHA-2}\label{sec:sec-str}
Based on the results of previous sections, quantum security strengths of AES and SHA-2 are drawn in this section.
Three MAXDEPTH parameters, $2^{40}$, $2^{64}$, and $2^{96}$, are adopted from~\cite{nist-quantum}.
Note that using these values of MAXDEPTH in our analysis is a conservative approach as our estimates only count Toffoli gates as time resources whereas NIST has counted all gates.
Security strength of SHA-2 is determined by collision finding problem, not by pre-image search problem.

\renewcommand{\arraystretch}{1.2}
\begin{table}[htbp]
  \caption{Trade-offs coefficients of AES-$k$ key search problem for $k\in\{128,192,256\}$ and SHA-$m$ collision finding problem for $m\in\{256,384,512\}$. Coefficients $c_k^{\ks}$ and $c_m^{\cf}$ are divided by their respective minimal values $c_{128}^{\ks} = c_4^{\ks}$ and $c_{256}^{\cf} = c^{\cfg}$. }
  \label{tab:longer-key}
  \centering
  \begin{tabular}{>{\centering}p{1.5cm} || >{\centering}p{2cm} | >{\centering}p{2cm} | >{\centering}p{2cm}}  \hline
                  &  AES-128 &  AES-192 &  AES-256    \tabularnewline \hline
    $c_k^{\rm KS}/c_{128}^{\rm KS}$
                  &$    1   $&$  1.560\ldots $&$  2.586\ldots $   \tabularnewline \hline
  \end{tabular}
  \begin{tabular}{>{\centering}p{1.5cm} || >{\centering}p{2cm} | >{\centering}p{2cm} | >{\centering}p{2cm}}  \hline
                  &  SHA-256 &  SHA-384  & SHA-512    \tabularnewline \hline
    $c_m^{\rm CF}/c_{256}^{\rm CF}$
                  &$    1   $&$  3.837\ldots $&$  3.940\ldots $   \tabularnewline \hline
  \end{tabular}
\end{table}
\renewcommand{\arraystretch}{1.0}

Resource estimates for AES-128 key search problem with circuit \acct{4} is extended to AES-192 and AES-256, and similarly that of SHA-256 collision finding problem with circuit \scct{6} is applied to SHA-384 and SHA-512.
Since the depth-qubit trade-offs curves (\ref{eq:TS-wo-bo-KS}) and (\ref{eq:TS-wo-bo-CF}) must holds for larger key and message digest sizes, we only compare their trade-offs coefficients in Tables\;\ref{tab:longer-key}.
There is a tendency that the values of coefficients grow as the key or message digest sizes get larger.
Increasing coefficient values reflect various complexity factors added; more rounds, longer schedule, larger word size, and so on.
Especially in hash, size of the message block in SHA-384 is doubled compared with SHA-256 leading to large gap between $c_{256}^{\cf}$ and $c_{384}^{\cf}$.
In contrast, $c_{384}^{\cf}$ and $c_{512}^{\cf}$ do not show much difference as SHA-384 and SHA-512 algorithms are identical except truncation and initial values.
The result of Sect.~\ref{subsec:CNS} is also extended to SHA-384 and reflected in Fig.~\ref{fig:security_strengths}.

Once trade-offs coefficients are obtained, we are able to draw the security strength of each algorithm in terms of required qubits as a function of Toffoli-depth.
Note that somewhere between ${\rm MAXDEPTH}=2^{64}$ and $2^{96}$, security strengths of SHA-256 (SHA-384) and AES-192 (AES-256) are reversed in order, due to their different trade-offs curve behaviors.
One minor note is that for large MAXDEPTH (for example, $2^{96}$), Proposition~\ref{prop:to-KS} does not exactly hold since the size of the domain is larger than that of the codomain in AES-192 and AES-256.
This factor is handled in a conservative way and reflected in Fig.~\ref{fig:security_strengths}.

\renewcommand{\arraystretch}{1.4}
\begin{figure}[htbp]
    \centering
    \raisebox{-0.5\totalheight}{
        \includegraphics[width=0.65\textwidth]{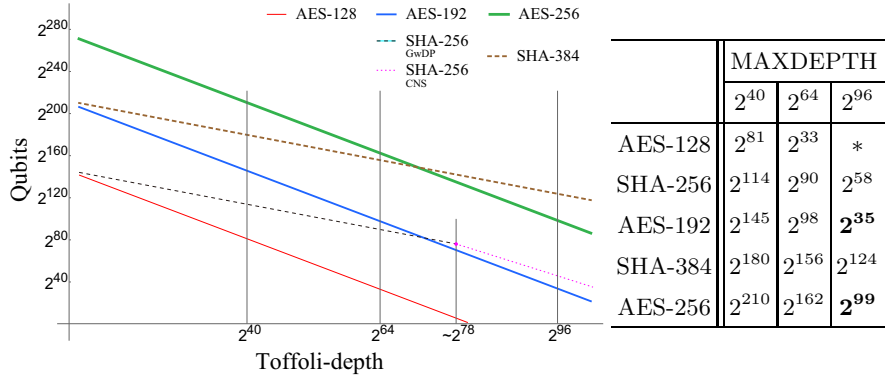}
    }
\!\!\!\!
    \small
    \begin{tabular}{>{\centering}p{1.3cm} || >{\centering}p{0.6cm} |
                  >{\centering}p{0.6cm}  | >{\centering}p{0.6cm} }  \hline
            &        \multicolumn{3}{c}{MAXDEPTH}                                \tabularnewline \cline{2-4}
    \T      &$ 2^{40}  $&$ 2^{64}  $&$ 2^{96}  $\tabularnewline \hline
    AES-128 &$ 2^{81}  $&$ 2^{33}  $&$    *    $\tabularnewline
    SHA-256 &$ 2^{114} $&$ 2^{90}  $&$ 2^{58}  $\tabularnewline
    AES-192 &$ 2^{145} $&$ 2^{98}  $&$ \mathbf{2^{35}}  $\tabularnewline
    SHA-384 &$ 2^{180} $&$ 2^{156} $&$ 2^{124} $\tabularnewline
    AES-256 &$ 2^{210} $&$ 2^{162} $&$ \mathbf{2^{99}}      $\tabularnewline
  \hline
  \end{tabular}
  \normalsize
    \caption{Security strengths of AES and SHA-2. Values in the table are approximated number of qubits required to run the respective algorithm for given MAXDEPTH.}
    \label{fig:security_strengths}
\end{figure}
\renewcommand{\arraystretch}{1.0}

Figure\;\ref{fig:security_strengths} summarizes the results which can be interpreted as another threshold to be used, for the security strength classification of proposed schemes in NIST PQC standardization process.

%
%


\section{Summary}\label{sec:con}
Instead of conventional query complexity, we have examined the time-space complexity of Grover's algorithm and its variants.
Three categories of cryptographic search problems and their characteristics are carefully considered in conjunction with probabilistic nature of quantum search algorithms.

To relate the time-space complexity with physical quantity,
we have proposed a way of quantifying the computational power of quantum computers.
Despite its simplicity, counting the number of sequential Toffoli gates reflects the reliable time complexity in estimating security levels of symmetric cryptosystems.
With simplified cost measure, one can estimate the quantum complexity of a cryptosystem concisely by counting (and focusing) relevant operations only.
It is worth noting that the above scheme is general for quantum resource estimates in symmetric cryptanalysis.

The scheme has been applied to resource estimates for AES and SHA-2.
When multiple quantum trade-offs options are given, the time-space complexity provides clear criteria to tell which is more efficient.
Based on the trade-offs observations made in AES and SHA-2, security strengths of respective systems are investigated with the MAXDEPTH assumption.

\subsubsection{Acknowledgement.}
We are grateful to Brandon Langenberg, Martin Roetteler, and Rainer Steinwandt for helpful discussion and sharing details of their previous work which has motivated us.

\newpage

\bibliographystyle{./splncs03}
\bibliography{reference}



\end{document}